\newenvironment{proof}{{\bf Proof:}}{\hspace*{\fill}\(\Box\)}
\newenvironment{claimproof}{{\bf Proof of the Claim:}}{\hspace*{\fill}\(\star\)}
\title{Byzantine Vector Consensus in Complete Graphs~\thanks{This research is supported
in part by
 National
Science Foundation awards CNS-1059540 and CNS-1115808 and the Cullen Trust for Higher Education. Any opinions, findings, and conclusions or recommendations expressed here are those of the authors and do not
necessarily reflect the views of the funding agencies or the U.S. government.}
}
\author{{\normalsize\bf Nitin H. Vaidya}\\ \normalsize University of Illinois at Urbana-Champaign\\
\normalsize nhv@illinois.edu\\  \normalsize Phone: +1 217-265-5414\\ ~\\
 \normalsize {\bf Vijay K. Garg}\\ \normalsize University of Texas at Austin\\ \normalsize garg@ece.utexas.edu\\ \normalsize Phone: +1 512-471-9424}
\date{February 11, 2013}
\newcommand{\comment}[1]{}
\newcommand{\nchoosek}[2]{{#1 \choose #2}}
\newcommand{\scriptw}{\mathcal{W}}
\newcommand{\scripta}{\mathcal{A}}
\newcommand{\scripte}{\mathcal{E}}
\newcommand{\scripth}{\mathcal{H}}
\newcommand{\scriptp}{\mathcal{P}}
\newtheorem{theorem}{Theorem}
\newtheorem{claim}{Claim}
\newtheorem{definition}{Definition}
\newtheorem{lemma}{Lemma}
\newcommand{\fillbox}{\hspace*{\fill}\(\Box\)}
\def\noflash#1{\setbox0=\hbox{#1}\hbox to 1\wd0{\hfill}}
\newcommand{\bfA}{{\bf A}}
\newcommand{\bfM}{{\bf M}}
\newcommand{\bfT}{{\bf T}}
\newcommand{\bfx}{{\bf x}}
\newcommand{\bfv}{{\bf v}}
\newcommand{\bfw}{{\bf w}}
\newcommand{\bfr}{{\bf r}}
\newcommand{\bfs}{{\bf s}}
\newcommand{\bfz}{{\bf z}}
\newcommand{\I}{\Gamma(S)}
\begin{document}

\maketitle

\thispagestyle{empty}

\begin{abstract}
{\normalsize
\setlength {\parskip}{8pt}                                             
Consider a network of $n$ processes each of which has
a $d$-dimensional vector of reals as its {\em input}. 
Each process can communicate directly with all the processes in the system;
thus the communication network is a {\em complete graph}.
All the communication channels are reliable and FIFO (first-in-first-out).
The problem of {\em Byzantine vector consensus} (BVC) requires agreement on a
$d$-dimensional vector that is in the {\em convex hull} of the $d$-dimensional
input vectors at the non-faulty processes.
We obtain the following results for Byzantine vector consensus in {\em complete graphs} while
tolerating up to $f$ Byzantine failures:
\begin{itemize}
\item 
We prove that in a synchronous system, $n\geq \max(~3f+1,~(d+1)f+1~)$
is necessary and sufficient for achieving Byzantine vector consensus.

\item In an asynchronous system, it is known that {\em exact}
consensus is impossible in presence of faulty processes. For an 
asynchronous system, we prove that
$n\geq (d+2)f+1$ is necessary and
sufficient to achieve {\em approximate} Byzantine vector consensus.
\end{itemize}
Our sufficiency proofs are constructive. We show sufficiency by providing
explicit algorithms that solve exact BVC in synchronous
systems, and approximate BVC in asynchronous systems.

We also obtain tight bounds on the number of processes for achieving
BVC using algorithms that
are restricted to a simpler communication pattern.

\setlength {\parskip}{5pt}                                             

}
\end{abstract}


\newpage

\setcounter{page}{1}

\section{Introduction}
\label{s_intro}

This paper addresses {\em Byzantine vector consensus} (BVC), wherein the input at each process is a
$d$-dimensional vector of reals, and each process is expected to decide on a
{\em decision vector} that is in the
{\em convex hull} of the input vectors at the non-faulty processes. 
The system consists of $n$ processes in $\scriptp=\{p_1,p_2,\cdots,p_n\}$. 
We assume $n>1$, since consensus is trivial for $n=1$.
At most $f$ processes may be Byzantine faulty, and may behave arbitrarily
\cite{lamport82}.
All processes can communicate with each other directly on
{\em reliable} {\em FIFO} (first-in first-out) channels. Thus,
 the communication network is a {\em complete graph}.
The input {\em vector} at each process may also be viewed as a {\em point} in the $d$-dimensional Euclidean space
${\bf R}^d$, where $d>0$ is a finite integer.
Due to this correspondence, we use the terms {\em point} and {\em vector}
interchangeably. Similarly, we interchangeably refer to the $d$ {\em elements}
of a vector as {\em coordinates}.
We consider two versions of the Byzantine vector consensus (BVC) problem, {\em
Exact BVC}\, and {\em Approximate BVC}.

\paragraph{\bf Exact BVC:}
Exact Byzantine vector consensus must satisfy the following three conditions.
\begin{itemize}
\item {\em Agreement}: The decision (or output) vector at all the non-faulty processes must be identical.
\item {\em Validity}: The decision vector at each non-faulty process must be in the convex hull of the input vectors at the non-faulty processes.
\item {\em Termination}: Each non-faulty process must terminate after a finite amount of time.
\end{itemize}

The traditional consensus problem \cite{lynch:book96,garay1998} is obtained when $d=1$; we refer to this as {\em scalar}
consensus.
$n\geq 3f+1$ is known to be necessary and sufficient 
for achieving Byzantine {\em scalar} consensus in complete graphs \cite{lamport82,lynch:book96}.  
We observe that simply performing {\em scalar} consensus 
on each dimension of the input vectors independently does not solve
the {\em vector} consensus problem. In particular, even if validity
condition for {\em scalar consensus} is satisfied for each dimension of the
vector separately, the above {\em validity} condition of vector consensus may not necessarily be satisfied.
 For instance, suppose that there are four processes, with one faulty process.
 Processes $p_1,p_2$ and $p_3$ are non-faulty, and have the following
3-dimensional input vectors, respectively:
$\bfx_1 = [\frac{2}{3}, \frac{1}{6}, \frac{1}{6}] $,
 $\bfx_2 = [\frac{1}{6}, \frac{2}{3}, \frac{1}{6}] $,
$\bfx_3 = [\frac{1}{6}, \frac{1}{6}, \frac{2}{3}] $.
Process $p_4$ is faulty.
If we perform Byzantine {\em scalar} consensus on each dimension of the vector separately, then the
processes may possibly agree on the decision vector 
$[\frac{1}{6}, \frac{1}{6}, \frac{1}{6}]$,
each element of which satisfies {\em scalar} validity condition {\em along each dimension}
separately; however, this decision vector {\em does not} satisfy the validity condition for BVC
because it is {\em not} in the convex hull of input vectors of non-faulty processes.
In this example, since every non-faulty process has a probability vector as its input vector,
BVC validity condition requires that the decision vector should also be a probability vector.
In general, for many optimization problems \cite{boyd2004},
 the set of feasible solutions is
a convex set in Euclidean space.
Assuming that every non-faulty process proposes a feasible solution,
BVC guarantees that the vector decided is also a feasible solution. Using scalar consensus along
each dimension is not sufficient to provide this guarantee.

\paragraph{\bf Approximate BVC:}
In an {\em asynchronous} system, processes may take steps at arbitrary relative speeds, and there
is no fixed upper bound on message delays.
Fischer, Lynch and Paterson \cite{lynch:imposs:async} proved that exact consensus is impossible
in asynchronous systems in the presence of even a single crash failure.
 As a way to circumvent
this impossibility result, Dolev et al. \cite{AA_Dolev_1986} introduced the notion of
{\em approximate} consensus, and proved the correctness of an algorithm
for approximate Byzantine {\em scalar} consensus in asynchronous systems when $n\geq 5f+1$.
Subsequently, Abraham, Amit and Dolev \cite{Abraham_optimalresilience04} established that
approximate Byzantine {\em scalar} consensus is possible in asynchronous systems if $n\geq 3f+1$.
Other algorithms for approximate consensus have also been proposed (e.g., \cite{gradecast,AA_Fekete_aoptimal}).
We extend the notion of approximate consensus to {\em vector} consensus.
{\em Approximate BVC} must satisfy the following conditions:

\begin{itemize}
\item {\em $\epsilon$-Agreement}: For $1\leq l\leq d$,
the $l$-th elements of the decision vectors at any two non-faulty processes
must be within $\epsilon$ of each other, where $\epsilon>0$ is a
pre-defined constant.

\item {\em Validity}: The decision vector at each non-faulty process must be in the convex hull of the input vectors at the non-faulty processes.

\item {\em Termination}: Each non-faulty process must terminate after a finite amount of time.
\end{itemize}
%
The main contribution of this paper is to establish the following bounds for {\em complete graphs}.

\begin{itemize}
\item 
In a synchronous system,
$n\geq \max(3f+1,(d+1)f+1)$ is necessary and sufficient for {\em Exact BVC}
in presence of up to $f$ Byzantine faulty processes.
~~(Theorems \ref{t_exact_nec} and \ref{t_exact_suff}).

\item
In an asynchronous system,
$n\geq (d+2)f+1$ is necessary and sufficient for {\em Approximate BVC}
in presence of up to $f$ Byzantine faulty processes.
~~(Theorems \ref{t_approx_nec} and \ref{t_approx_suff}).
\end{itemize}

Observe that the two bounds above are different when $d>1$, unlike the case of
$d=1$ (i.e., scalar consensus). When $d=1$, in a complete graph, $3f+1$ processes are sufficient
for exact consensus in synchronous systems, as well as approximate
consensus in asynchronous systems \cite{Abraham_optimalresilience04}. For $d>1$, the lower bound for
asynchronous systems is larger by $f$ compared to the bound for synchronous
systems. 


In prior literature, the term {\em vector consensus} has also been used to
refer to another form of consensus, wherein the input
at each process is a {\em scalar}, but the agreement is on a vector
containing these scalars \cite{doudou98,neves05}. Thus, our results 
are for a different notion of consensus.

\subsubsection*{\normalsize Simpler (Restricted) Algorithm Structure}

In prior literature, iterative algorithms with very simple structure
have been proposed to achieve {\em approximate} consensus, including
asynchronous approximate Byzantine scalar consensus \cite{AA_Dolev_1986}
in {\em complete} graphs, and synchronous as well as asynchronous
approximate Byzantine consensus in
{\em incomplete} graphs \cite{vaidya12podc}.
Section \ref{s_simple}
extends these simple structures
to vector consensus in complete graphs,
and obtains the following tight bounds:
(i) $n\geq (d+2)f+1$ for synchronous systems,
and
(ii) $n\geq (d+4)f+1$ for asynchronous systems.
Observe that the bound for the simple iterative algorithms in asynchronous
systems is larger by $2f$ when compared to the bound stated earlier: this
is the cost of restricting the algorithm structure.
This $2f$ gap is analogous to that between the sufficient condition of $n\geq 3f+1$ for
asynchronous scalar consensus proved by Abraham, Amit and Dolev \cite{Abraham_optimalresilience04},
the sufficient condition of $n\geq 5f+1$ demonstrated by Dolev et al. \cite{AA_Dolev_1986}
using a simpler algorithm.

\subsubsection*{Our Notations}
Many notations introduced throughout the paper are also summarized in
Appendix \ref{a_notations}.
We use operator $|\,.\,|$ to obtain the size of 
a {\em multiset} or a {\em set}. We use operator $\parallel .\parallel$
to obtain the absolute value of a scalar.




\section{Synchronous Systems}
\label{s_sync}

In this section, we derive necessary and sufficient conditions for exact BVC
in a synchronous system with up to $f$ faulty processes.
The discussion in the rest of this paper assumes that the network is a {\em complete graph},
even if this is not stated explicitly in all the results.

\subsection{Necessary Condition for Exact BVC}
\label{ss_sync_nec}

\begin{theorem}
\label{t_exact_nec}
$n\geq \max(3f+1,(d+1)f+1)$ is necessary for Exact BVC in a synchronous system.
\end{theorem}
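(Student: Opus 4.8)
The plan is to prove the two lower bounds $n\ge 3f+1$ and $n\ge(d+1)f+1$ separately; together they give the stated $\max$. Throughout I would assume $f\ge 1$, since for $f=0$ the bound reduces to the trivial $n\ge 1$.

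First I would dispatch $n\ge 3f+1$ by a reduction: any BVC algorithm already solves Byzantine \emph{scalar} consensus. Restrict attention to input configurations in which all processes agree on coordinates $2,\dots,d$ and differ only in coordinate $1$. Then the convex hull of the non-faulty inputs is a segment parallel to the first axis, so the BVC \emph{Validity} condition, read off on coordinate $1$, is exactly scalar validity, and \emph{Agreement} gives scalar agreement on that coordinate. Hence the algorithm yields a Byzantine scalar consensus algorithm, for which $n\ge 3f+1$ is known to be necessary in complete graphs \cite{lamport82,lynch:book96}.

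The heart of the argument is $n\ge(d+1)f+1$, the part specific to $d>1$. I would argue by contradiction, assuming $n\le(d+1)f$. Set $k=\min(n,d+1)$; note $k\ge 2$ since $n>1$ and $d>0$. Partition $\scriptp$ into $k$ nonempty groups $G_1,\dots,G_k$, each of size at most $f$: when $n\le d+1$ take singletons (legitimate as $f\ge 1$), and when $d+1<n\le(d+1)f$ a balanced partition has maximum part $\lceil n/(d+1)\rceil\le f$ by pigeonhole. Because $k\le d+1$, I can pick $k$ affinely independent points $v_1,\dots,v_k\in{\bf R}^d$, the vertices of a $(k-1)$-simplex. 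Now consider the base execution $E$ in which every process is non-faulty and each process of $G_i$ holds input $v_i$; by \emph{Agreement} and \emph{Termination} all processes decide a common vector $\bfz$. For each $i$, form an execution $E_i'$ identical to $E$ in every message and local step but with the processes of $G_i$ relabelled as faulty; this is legitimate since $|G_i|\le f$ and a faulty process may behave exactly like a correct one. Every process outside $G_i$ has the same view in $E_i'$ as in $E$, hence still decides $\bfz$, while the non-faulty inputs in $E_i'$ are $\{v_j:j\ne i\}$, so \emph{Validity} forces $\bfz\in\mathrm{conv}\{v_j:j\ne i\}$, the facet opposite $v_i$. Writing $\bfz=\sum_j\lambda_j v_j$ in barycentric coordinates with $\lambda_j\ge 0$ and $\sum_j\lambda_j=1$, membership in that facet means $\lambda_i=0$; ranging over all $i$ forces every $\lambda_i=0$, contradicting $\sum_j\lambda_j=1$.

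The delicate point—and where I expect the real content to lie—is the geometric observation that a single decision value, driven simultaneously into all $k$ facets of a simplex by the $k$ mutually indistinguishable ``one-group-faulty'' scenarios, cannot exist, because the facets of a simplex have empty common intersection. The consensus-theoretic ingredient (a faulty group mimicking correct behavior so that the non-faulty views are unchanged) is routine; the care is in choosing $k=\min(n,d+1)$ so the partition into groups of size at most $f$ and the affine independence of the $v_i$ hold simultaneously for every $n\le(d+1)f$. Combining the two bounds yields $n\ge\max(3f+1,(d+1)f+1)$.
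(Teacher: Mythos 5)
Your proof is correct, and for the core bound $n\geq (d+1)f+1$ it takes a genuinely different route from the paper. The paper argues in two stages: it first settles $f=1$ by an explicit construction (the standard basis vectors as inputs at $p_1,\dots,p_d$ and the origin at $p_{d+1}$), showing that the decision would have to lie in the intersection $\cap_{i=1}^{d+1}Q_i$ of the convex hulls of the $n-1$-process input multisets --- an intersection it shows to be empty --- with the indistinguishability reasoning behind ``the decision must lie in every $Q_i$'' deferred to an appendix; it then lifts to general $f$ by the standard simulation argument, in which each of $d+1$ processes simulates $f$ processes of a hypothetical algorithm tolerating $f$ faults among $(d+1)f$ processes. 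You instead handle all $n\leq (d+1)f$ in one direct argument: partition the processes into $k=\min(n,d+1)$ groups of size at most $f$ (your pigeonhole check $\lceil n/(d+1)\rceil\leq f$ is right), give the groups affinely independent inputs $v_1,\dots,v_k$, and use the $k$ mutually indistinguishable ``one-group-faulty'' executions to force the common decision into every facet of the $(k-1)$-simplex, which is impossible since the unique barycentric coordinates would all have to vanish. The geometric core is the same as the paper's $f=1$ construction --- the paper's basis-plus-origin inputs are exactly the vertices of a $d$-simplex and its $Q_i$ are the facet hulls --- but your group-based scenario argument subsumes both the appendix's indistinguishability claim (made fully explicit, with the faulty group mimicking correct behavior so that outside views coincide) and the simulation step, yielding a self-contained proof at the cost of slightly more bookkeeping ($k=\min(n,d+1)$, group sizes, $k\geq 2$); the paper's version is shorter modulo the cited, standard simulation technique. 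Your $3f+1$ part is the same reduction to scalar consensus as the paper's, in a trivially different guise (varying one coordinate rather than making all $d$ coordinates equal).
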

\begin{proof}
From \cite{lamport82,lynch:book96},
we know that, for $d=1$ (i.e., scalar consensus),
$n\geq 3f+1$ is a necessary condition for achieving exact Byzantine consensus
in presence of up to $f$ faults.
If we were to restrict the $d$-dimensional input vectors to have identical $d$ elements,
then the problem of vector consensus reduces to scalar consensus. Therefore, $n\geq 3f+1$
is also a necessary condition for {\em Exact BVC} for arbitrary $d$.
Now we prove that $n\geq (d+1)f+1$ is also a necessary condition.

First consider the case when $f=1$, i.e., at most one process may be faulty.
Since none of the non-faulty processes know which process, if any, is faulty,
as elaborated in Appendix \ref{a_sync}, the 
decision vector must be in the convex hull of each multiset containing
the input vectors of $n-1$ of the processes (there are $n$ such multisets).\footnote{Since
 the state of two processes may be identical, we use a {\em multiset}
 to represent the collection of the states of a subset of processes.
Appendix \ref{a_multisets} elaborates on the notion of multisets.}
Thus, this intersection must be non-empty,
for all possible input vectors at the $n$ processes.
(Appendix \ref{a_sync} provides further clarification.)
We now show that the intersection may be empty when
$n=d+1$; thus, $n=d+1$ is not sufficient for $f=1$.

Suppose that $n=d+1$.
Consider the following set of input vectors.
The input vector of process $p_i$, where $1\leq i\leq d$, is
a vector whose $i$-th element is $1$, and the remaining elements are 0.
The input vector at process $p_{d+1}$ is the all-0 vector (i.e., the vector
with all elements 0). Note that the $d$ input vectors at $p_1,\cdots,p_d$
form 
the standard basis for the $d$-dimensional vector space.
Also, none of the $d+1$ input vectors can be represented as a convex
combination of the remaining $d$ input vectors.
For $1\leq i\leq d+1$,
let $Q_i$ denote the convex hull of the inputs at the $n-1=d$
processes in $\scriptp-\{p_i\}$. 
We now argue that 
$\cap_{i=1}^{d+1}\, Q_i$ is empty.

For $1\leq i\leq d$, observe that for all the points in $Q_i$, 
the $i$-th coordinate is 0. Thus, any point
that belongs to the intersection $\cap_{i=1}^d\, Q_i$ must have all its
coordinates 0. That is, only the all-0 vector belongs to
$\cap_{i=1}^d\, Q_i$.
Now consider $Q_{d+1}$, which is the convex hull of
the inputs at the first $d$ processes. Due to the choice of the inputs
at the first $d$ processes, the origin (i.e., the all-0 vector) does not belong
to $Q_{d+1}$. From the earlier observation on $\cap_{i=1}^d\, Q_i$, it then
follows that $\cap_{i=1}^{d+1}\,Q_i=\emptyset$.
Therefore, the {\em Exact BVC} problem for $f=1$ cannot be solved with $n=d+1$.
Thus, $n=d+1$ is not sufficient.
It should be easy to see that $n\leq d+1$ is also not sufficient.
Thus, $n\geq d+2$ is a necessary condition for $f=1$.

Now consider the case of $f>1$.
Using the commonly used simulation approach \cite{lamport82},
we can prove that $(d+1)f$ processes are not sufficient.
In this approach, $f$ {\em simulated processes} are implemented
by a single process. If a correct algorithm were to exist for tolerating
$f$ faults among $(d+1)f$ processes, then we can obtain a correct
algorithm to tolerate a single failure among $d+1$ processes,
contradicting our result above.
\comment{+++++++++++++++++++++++++++
The simulation approach
simulates $f$ processes using a 
Suppose that there exists an algorithm $\scripta$ that can solve
the Exact BVC problem with $f$ faults in a system with $n\leq (d+1)f$. 
We will use $\scripta$ to solve Exact BVC in
a system consisting of $d+1$
processes, of which at most 1 process may be faulty. Suppose that each of the $d+1$ processes
simulates $f$ pseudo-processes. This results in $(d+1)f$ pseudo-processes. Of these,
at most $f$ pseudo-processes may be faulty, because at most 1 of the processes may be faulty.
The pseudo-processes simulated by a certain process have the same input as that process.
The output of a process is obtained as the output of one of the pseudo-processes simulated by
that process.
Since algorithm $\scripta$ can achieve consensus among the $(d+1)f$ pseudo-processes
with up to $f$ failures, $\scripta$ can also be used to achieve {\em Exact BVC} among
$d+1$ processes with 1 failure. This contradicts the result above for $f=1$.
Thus, $(d+1)f$ processes are insufficient to tolerate $f$ faults, and
$n\geq (d+1)f+1$ is a necessary condition.
++++++++++++++++++++}
Thus, $n\geq (d+1)f+1$ is necessary for $f\geq 1$.
(For $f=0$, the necessary condition holds trivially.)
\end{proof}

\subsection{Sufficient Condition for Exact BVC}
\label{ss_sync_suff}

We now present an algorithm for Exact BVC in a synchronous system, and prove its correctness
in a complete graph with $n\geq \max(3f+1,(d+1)f+1)$.
The algorithm uses function $\Gamma(Y)$ defined below, where $Y$ is a multiset of points.
$\scripth(T)$ denotes the convex hull of a multiset $T$.
\begin{eqnarray} \Gamma(Y) = \cap_{T\subseteq Y,\, |T|=|Y|-f}~\scripth(T).
\label{e_I}
\end{eqnarray}
The intersection above is over the convex hulls of all subsets of $Y$ of size $|Y|-f$.

 \vspace*{6pt}
\hrule
 \vspace*{2pt}
{\bf
Exact BVC algorithm} for $n\geq \max(3f+1,\, (d+1)f+1)$\,:
 \vspace*{4pt}
\hrule
\hrule
\begin{enumerate}
\item Each process uses a scalar {\em Byzantine broadcast} algorithm 
(such as \cite{lamport82,dolev1990})
to broadcast each element of its input vector to all the other processes
(each element is a scalar).
The {\em Byzantine broadcast} algorithm allows a designated sender to broadcast a scalar
value to the other processes, while satisfying the following properties when $n\geq 3f+1$:
(i) all the non-faulty processes decide on an identical scalar value, and (ii) if
the sender is non-faulty, then the value decided by the non-faulty processes is the sender's proposed
(scalar) value.
Thus, non-faulty processes can agree on the $d$ elements of the input vector at each
of the $n$ processes.

 At the end of the this step,
 each non-faulty process would have received an {\em identical} multiset $S$
 containing $n$ vectors, such that the vector corresponding to each non-faulty
 process is identical to the input vector at that process.
\item 
Each process chooses as its {\em decision} vector
a point in $\I$; all non-faulty processes choose the point identically using
a deterministic function.
We will soon show that $\I$ is non-empty.

\end{enumerate}
\hrule

We now prove that the above algorithm is correct.
Later, we show how
the {\em decision vector} can be found in Step 2 using linear programming.
The proof of correctness of the above algorithm
uses the following celebrated theorem by Tverberg \cite{perles07}:
\begin{theorem}
[Tverberg's Theorem \cite{perles07}]
\label{t_tverberg}
For any integer $f \geq 1$, and for every multiset $Y$
containing at least $(d+1)f+1$ points in ${\bf R}^d$, there exists a partition $Y_1,\cdots, Y_{f+1}$
of $Y$ into $f+1$ non-empty multisets such that $\cap_{l=1}^{f+1}\, \scripth(Y_l)\neq \emptyset$.
\end{theorem}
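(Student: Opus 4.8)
The plan is to give the tensor-product argument (Sarkaria's proof), which reduces Tverberg's theorem to the \emph{Colorful Carath\'eodory theorem}: in ${\bf R}^m$, if each of $m+1$ ``color classes'' $C_1,\dots,C_{m+1}$ contains a common point $p$ in its convex hull, then one can pick a transversal $c_k\in C_k$ with $p\in\scripth(\{c_1,\dots,c_{m+1}\})$. Write $r=f+1$, so the hypothesis reads $|Y|\geq (d+1)(r-1)+1$; relabel the points as $x_1,\dots,x_N$ with $N=(d+1)(r-1)+1$, and homogenize via $\hat{x}_j=(x_j,1)\in{\bf R}^{d+1}$. I would also fix vectors $v_1,\dots,v_r\in{\bf R}^{r-1}$ that are the vertices of a non-degenerate simplex centered at the origin, so that $\sum_{i=1}^r v_i=0$ and this is the \emph{only} linear dependence among them (the $v_i$ span the $(r-1)$-dimensional space, so their dependences form a $1$-dimensional space, spanned by the all-ones vector).

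First I would set up $N$ color classes in the space ${\bf R}^{(d+1)(r-1)}$. For each point index $j$, let $C_j=\{\,\hat{x}_j\otimes v_i : 1\leq i\leq r\,\}$, where $\otimes$ is the Kronecker product. The ambient dimension is $(d+1)(r-1)=N-1$, so the number of color classes is exactly $N=(N-1)+1$, matching the count demanded by Colorful Carath\'eodory. The key preliminary observation is that the origin lies in the convex hull of every $C_j$: averaging the $r$ points gives $\frac{1}{r}\sum_{i=1}^r \hat{x}_j\otimes v_i=\hat{x}_j\otimes\bigl(\frac{1}{r}\sum_i v_i\bigr)=\hat{x}_j\otimes 0=0$. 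Applying the Colorful Carath\'eodory theorem then yields a color $\sigma(j)\in\{1,\dots,r\}$ for each $j$ and nonnegative weights $\lambda_j$ with $\sum_j\lambda_j=1$ such that $\sum_{j=1}^N \lambda_j\,(\hat{x}_j\otimes v_{\sigma(j)})=0$.

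The decisive step is to read the partition off this identity. Grouping by color and setting $w_i=\sum_{j:\sigma(j)=i}\lambda_j\,\hat{x}_j\in{\bf R}^{d+1}$, the relation becomes $\sum_{i=1}^r w_i\otimes v_i=0$. Comparing coordinates, for each fixed coordinate $a$ of ${\bf R}^{d+1}$ the numbers $w_{1,a},\dots,w_{r,a}$ form the coefficients of a linear dependence $\sum_i w_{i,a} v_i=0$ among the $v_i$; since the only such dependence is the all-ones one, all $w_{i,a}$ are equal, hence $w_1=\cdots=w_r=:w$. Reading the last coordinate shows every color has the same total weight $\sum_{j:\sigma(j)=i}\lambda_j=\mu$, and since the $r$ equal weights sum to $1$ we get $\mu=1/r>0$; thus each class $Y_i:=\{x_j:\sigma(j)=i\}$ is non-empty. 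Dividing the first $d$ coordinates of $w$ by $\mu$ exhibits the common point $\bar w\in{\bf R}^d$ as a genuine convex combination of the $x_j$ in $Y_i$, simultaneously for every $i$, so $\bar w\in\bigcap_{i=1}^r \scripth(Y_i)$.

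I expect the main obstacle to be the construction itself rather than the subsequent verification: the homogenization together with the tensoring by the simplex vertices $v_i$ is exactly the non-obvious ingredient, and the entire argument pivots on the fact that the $v_i$ admit only the single relation $\sum_i v_i=0$ --- it is this that converts the lone vector equation supplied by Colorful Carath\'eodory into the $r$-fold equality $w_1=\cdots=w_r$ encoding a balanced partition. A secondary point needing care is the dimension bookkeeping: one must check that $N=(d+1)(r-1)+1$ matches $(\text{ambient dimension})+1$ \emph{exactly}, which is precisely what makes the bound $|Y|\geq(d+1)f+1$ tight.
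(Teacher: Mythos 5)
The paper never proves this statement: Tverberg's theorem is quoted as a known result, with \cite{perles07} cited for the multiset version, so there is no in-paper argument to compare yours against. What you have written is Sarkaria's tensor-trick proof (in its streamlined B\'ar\'any--Onn form), and it is correct. The dimension bookkeeping checks out: with $r=f+1$ and $N=(d+1)(r-1)+1$ you have exactly $N$ color classes in ambient dimension $N-1$; the origin lies in $\scripth(C_j)$ for every $j$ because $\sum_{i=1}^r v_i=0$; and since the dependences among $v_1,\dots,v_r$ form a one-dimensional space spanned by the all-ones vector, the single relation $\sum_i w_i\otimes v_i=0$ delivered by Colorful Carath\'eodory does force $w_1=\cdots=w_r$, with the homogenizing coordinate then forcing each color class to carry total weight exactly $1/r>0$, so every class is non-empty and the rescaled point lies in all $r$ hulls simultaneously. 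Two small gaps to close. First, the theorem hypothesis is ``at least $(d+1)f+1$ points,'' while your relabeling silently assumes $|Y|=N$; if $|Y|>N$, run the argument on any $N$ of the points and assign the leftover points arbitrarily to, say, $Y_1$ --- this only enlarges $\scripth(Y_1)$, so the common point survives and all classes stay non-empty, but the partition must exhaust $Y$, so this sentence is needed. Second, since $Y$ is a multiset (and the paper applies the theorem to multisets of process states, where repetitions genuinely occur), you should remark that the color classes are indexed by occurrences $j$ rather than by point values, so repeated points cause no difficulty. With those two remarks added, your proof is complete, and it supplies something the paper deliberately treats as a black box.
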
 
The points in multiset $Y$ above are not necessarily distinct \cite{perles07};
thus, the same point may occur multiple times in $Y$.
(Appendix \ref{a_multisets} elaborates on the notion of multisets, and multiset partition.)
The partition in Theorem \ref{t_tverberg} is called a {\em Tverberg partition}, and the points in
$\cap_{l=1}^{f+1}\, \scripth(Y_l)$ in Theorem \ref{t_tverberg} are called {\em Tverberg points}.
Appendix \ref{a_tverberg} provides an illustration of a
Tverberg partition for points in 2-dimensional space.

The lemma below is used to prove the correctness of the above algorithm, as well
as the algorithm presented later in Section \ref{s_async}.

\begin{lemma}
For any multiset $Y$
containing at least $(d+1)f+1$ points in ${\bf R}^d$,
$\Gamma(Y)\neq\emptyset$.
\label{lem:TverSuper}
\end{lemma}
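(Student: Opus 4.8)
The plan is to prove the lemma by a direct reduction to Tverberg's Theorem (Theorem~\ref{t_tverberg}). Since $|Y|\geq (d+1)f+1$, Theorem~\ref{t_tverberg} yields a Tverberg partition $Y_1,\cdots,Y_{f+1}$ of $Y$ into $f+1$ non-empty multisets together with a Tverberg point $\bfz\in\cap_{l=1}^{f+1}\scripth(Y_l)$. I would then argue that this single point $\bfz$ already lies in $\Gamma(Y)$; this suffices, since it exhibits an explicit member of the intersection defining $\Gamma(Y)$, and hence shows $\Gamma(Y)\neq\emptyset$.

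To show $\bfz\in\Gamma(Y)$, recall from Equation~\ref{e_I} that $\Gamma(Y)$ is the intersection of the convex hulls $\scripth(T)$ over all multisets $T\subseteq Y$ with $|T|=|Y|-f$. I would fix an arbitrary such $T$ and show $\bfz\in\scripth(T)$. The key observation is that $T$ arises from $Y$ by deleting exactly $f$ points (counting multiplicity). In the Tverberg partition, each deleted point belongs to exactly one of the $f+1$ parts, so the deletions can touch at most $f$ distinct parts. By the pigeonhole principle, at least one part---say $Y_k$---loses none of its points, i.e.\ $Y_k\subseteq T$.

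The remainder follows from monotonicity of the convex hull: since $Y_k\subseteq T$, we have $\scripth(Y_k)\subseteq\scripth(T)$, and therefore $\bfz\in\scripth(Y_k)\subseteq\scripth(T)$. As $T$ was an arbitrary subset of the required size, $\bfz$ lies in every hull appearing in the defining intersection, so $\bfz\in\Gamma(Y)$.

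I do not anticipate a substantive obstacle; the whole argument is a clean reduction to Tverberg's Theorem, with the pigeonhole count ($f$ deletions against $f+1$ parts) doing the real work. The only point demanding care is the multiset bookkeeping: the phrases ``deleting $f$ points'' and ``each deleted point lies in a unique part'' must be read with multiplicities, so that the count of at most $f$ touched parts remains valid even when a single point of $\mathbf{R}^d$ recurs across several parts. The multiset-partition conventions clarified in Appendix~\ref{a_multisets} make this reading precise and keep the pigeonhole step rigorous.
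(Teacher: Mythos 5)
Your argument is correct and is essentially the paper's own proof: both take a Tverberg partition $Y_1,\cdots,Y_{f+1}$ guaranteed by Theorem~\ref{t_tverberg}, observe that any $T\subseteq Y$ with $|T|=|Y|-f$ can exclude points from at most $f$ of the $f+1$ parts and hence contains some part entirely, and conclude $\cap_{l=1}^{f+1}\scripth(Y_l)\subseteq\Gamma(Y)\neq\emptyset$. Your explicit attention to multiset bookkeeping (deletions counted with multiplicity, per Appendix~\ref{a_multisets}) is a welcome refinement the paper leaves implicit, but it does not change the route.
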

\begin{proof}
Consider a Tverberg partition
of $Y$ into $f+1$ non-empty subsets $Y_1,\cdots,Y_{f+1}$,
such that the set of Tverberg points
$\cap_{l=1}^{f+1}\, \scripth(Y_l)\neq\emptyset$.
Since $|Y|\geq (d+1)f+1$, by Theorem \ref{t_tverberg}, such a partition exists.
By (\ref{e_I}) we have
\begin{eqnarray}
\Gamma(Y) = \cap_{T\subseteq Y,\, |T|=|Y|-f}~\scripth(T).\label{e_I_2}
\end{eqnarray}
Consider any $T$ in (\ref{e_I_2}).
Since $|T|=|Y|-f$ and there are $f+1$ subsets in the Tverberg partition of $Y$,
$T$ excludes elements from at most $f$ of these subsets.
Thus, $T$ contains at least one subset from the partition.
Therefore, for {\bf each} $T$,
$\cap_{l=1}^{f+1}\, \scripth(Y_l)~\subseteq~\scripth(T)$.
Hence, from (\ref{e_I_2}), it follows that
$ \cap_{l=1}^{f+1}\, \scripth(Y_l)~ \subseteq~\Gamma(Y)$. 
Also, because
$ \cap_{l=1}^{f+1}\, \scripth(Y_l)\neq\emptyset$,
it now follows that $\Gamma(Y)\neq\emptyset$. 
\end{proof}

\comment{++++++++++++++++++++++++
\begin{lemma}
For any multiset $Y$
containing at least $(d+1)f+1$ points in ${\bf R}^d$,
consider a Tverberg partition $Y_1,\cdots,Y_{f+1}$.
Then, $\Gamma(Y)\neq\emptyset$, and all the Tverberg points
for the Tverberg partition are
contained in $\Gamma(Y)$,
that is, $\cap_{l=1}^{f+1}\,\scripth(Y_l)~\subseteq~\Gamma(Y)$.
\label{lem:TverSuper}
\end{lemma}
\begin{proof}
Consider Tverberg partition of $Y$ into $f+1$ non-empty multisets $Y_1,\cdots,Y_{f+1}$.
As per Theorem \ref{t_tverberg}, the set of Tverberg points
$\cap_{l=1}^{f+1}\, \scripth(Y_l)\neq\emptyset$.
By (\ref{e_I}) we have
\begin{eqnarray}
\Gamma(Y) = \cap_{T\subset Y,\, |T|=|Y|-f}~\scripth(T).\label{e_I_2}
\end{eqnarray}
Consider any $T$ in (\ref{e_I_2}).
Since $|T|=|Y|-f$ and there are $f+1$ subsets in the partition of $Y$,
$T$ excludes elements from at most $f$ of these subsets.
Thus, $T$ contains at least one subset from the partition.
Therefore, for {\bf each} $T$,
$\cap_{l=1}^{f+1}\, \scripth(Y_l)~\subseteq~\scripth(T)$.
Hence, from (\ref{e_I_2}), it follows that
$ \cap_{l=1}^{f+1}\, \scripth(Y_l)~ \subseteq~\Gamma(Y)$. 
Also, because
$ \cap_{l=1}^{f+1}\, \scripth(Y_l)\neq\emptyset$,
it now follows that $\Gamma(Y)\neq\emptyset$. 
\end{proof}
++++++++++++++++++++++++++++++++++++++++++}

We can now prove the correctness of our Exact BVC algorithm.
\begin{theorem}
\label{t_exact_suff}
$n\geq \max(3f+1,(d+1)f+1)$ is sufficient for achieving Exact BVC in a synchronous system.
\end{theorem}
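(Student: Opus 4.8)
The plan is to verify that the Exact BVC algorithm satisfies the three required conditions---Agreement, Validity, and Termination---under the hypothesis $n\geq \max(3f+1,(d+1)f+1)$. The backbone of the argument is Lemma~\ref{lem:TverSuper}: I must show that the multiset $S$ assembled in Step~1 has size $n\geq (d+1)f+1$, so that $\Gamma(S)\neq\emptyset$ and the deterministic selection in Step~2 is well-defined.

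\medskip

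First I would handle Step~1. Because $n\geq 3f+1$, each invocation of the scalar Byzantine broadcast primitive delivers its two stated guarantees: all non-faulty processes agree on an identical scalar, and when the sender is non-faulty that scalar equals the sender's true input coordinate. Running this primitive on all $d$ coordinates of all $n$ senders, every non-faulty process ends Step~1 holding the \emph{same} multiset $S$ of $n$ vectors, in which each non-faulty process's entry equals its genuine input vector. (A faulty sender's entry is some fixed but possibly arbitrary vector, identical across all non-faulty recipients.) Since $|S|=n\geq (d+1)f+1$, Lemma~\ref{lem:TverSuper} gives $\Gamma(S)=\I\neq\emptyset$, so Step~2's deterministic choice succeeds; this settles \emph{Termination}, as both steps complete in finitely many synchronous rounds.

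\medskip

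Next I would establish \emph{Agreement}. Since every non-faulty process holds the identical multiset $S$ and applies the same deterministic tie-breaking function to select a point of $\I$, all non-faulty processes output the same decision vector. The substantive step is \emph{Validity}. By definition, $\I=\cap_{T\subseteq S,\,|T|=|S|-f}\scripth(T)$. Consider the particular subset $T^\ast$ obtained by deleting from $S$ exactly the (at most $f$) entries contributed by faulty processes. Then $|T^\ast|\geq n-f\geq |S|-f$, and $T^\ast$ consists solely of genuine non-faulty input vectors, so $\scripth(T^\ast)$ is contained in the convex hull of the non-faulty inputs. Because the decision vector lies in $\I\subseteq\scripth(T^\ast)$ (one may pass to a size-exactly-$(|S|-f)$ subset of $T^\ast$ if strict size is required, which only shrinks the hull and keeps it inside the non-faulty hull), it lies in the convex hull of the non-faulty inputs, which is precisely the Validity requirement.

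\medskip

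The main obstacle is the Validity step, and specifically the matching of the intersection's combinatorics to the fault pattern: one must argue that at least one term $\scripth(T)$ in the intersection defining $\I$ is built entirely from non-faulty inputs, and that this forces every point of $\I$ into the non-faulty convex hull. The key counting fact is that dropping $f$ entries from $S$ is enough to purge all faulty contributions (there are at most $f$ of them), so such a clean $T$ exists; the remaining care is purely bookkeeping about multiset sizes when $|S|>(d+1)f+1$. The role of the bound $n\geq 3f+1$ is confined to validating the broadcast primitive in Step~1, while $n\geq (d+1)f+1$ is exactly what Lemma~\ref{lem:TverSuper} needs for non-emptiness---so the two halves of the hypothesis $\max(3f+1,(d+1)f+1)$ serve distinct, complementary purposes.
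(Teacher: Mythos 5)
Your proof is correct and follows essentially the same route as the paper's: non-emptiness of $\I$ via Lemma~\ref{lem:TverSuper}, agreement from the identical multiset $S$ plus a deterministic selection, validity from an all-non-faulty subset $T^\ast$ of size exactly $n-f$ whose hull contains $\I$, and termination from the finite broadcast primitive. Your parenthetical care about passing to a size-exactly-$(n-f)$ subset when fewer than $f$ processes are faulty is a correct (and slightly more explicit) handling of a detail the paper treats implicitly.
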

\begin{proof}
We prove that the above {\em Exact BVC} algorithm is correct when $n\geq \max(3f+1,(d+1)f+1)$. 
The {\em termination} condition holds because the {\em Byzantine broadcast} algorithm used in
Step 1 terminates in finite time.
%
%
Since $|S|=n\geq (d+1)f+1$, by Lemma \ref{lem:TverSuper},
$\I\neq\emptyset$.
\comment{++++++++++++++++++++
Since $|S|=n\geq (d+1)f+1$, by Theorem \ref{t_tverberg}, 
there exists a Tverberg partition $S_1,S_2,\cdots,S_{f+1}$ of
$S$ with a non-empty set of Tverberg points, $J = \cap_{l=1}^{f+1}\, \scripth(S_l)$.
From Lemma \ref{lem:TverSuper}, $J \subseteq \I$.
Since $J\neq\emptyset$, $\I\neq\emptyset$.
++++++++++++++++++}
By (\ref{e_I}) we have
\begin{eqnarray}
\Gamma(S) = \cap_{T\subseteq S,\, |T|=|S|-f}~\scripth(T).\label{e_S_2}
\end{eqnarray}
At least one of the multisets $T$ in (\ref{e_S_2}), say $T^*$, must contain the
inputs of {\em only} non-faulty processes,
because $|T|=|S|-f=n-f$, and there are at most $f$ faulty processes.
By definition of $\I$, $\I\subseteq \scripth(T^*)$. Then, from the definition of
$T^*$, and the fact that the decision vector is chosen from $\I$, the {\em validity}
condition follows.
 
{\em Agreement} condition holds because all the non-faulty processes have identical
$S$, and pick as their decision vector a point in $\Gamma(S)$ using a deterministic function
in Step 2.
\end{proof}

We now show how Step 2 of the Exact BVC algorithm can be implemented 
using linear programming. The input to the linear program is 
$S=\{\bfs_i~:~1\leq i\leq n\}$, a multiset of $d$-dimensional vectors.
Our goal is to find a vector $\bfz \in \I$; or equivalently, find a vector $\bfz$
that can be expressed as a convex combination of vectors in $T$ for all choices $T \subseteq S$
such that $|T| = n-f$.
The linear program uses the following $d+\nchoosek{n}{n-f}(n-f)$ variables.
\begin{itemize}
\item $\bfz_1,..\bfz_d$\,: variables for $d$ elements of vector $\bfz$.
\item $\alpha_{T,i}$\,: coefficients such that $\bfz$ can be written as convex combination of vectors in $T$.
We include here only those $n-f$ indices $i$ for which $\bfs_i \in T$.
\end{itemize}

For every $T$, the linear constraints are as follows.
\begin{itemize}
\item $\bfz = \sum_{\bfs_i \in T}\, \alpha_{T,i}\,\bfs_i$ ~~($\bfz$ is a linear combination of $\bfs_i \in T$)
\item $\sum_{\bfs_i \in T}\, \alpha_{T,i} = 1$ ~~(The sum of all coefficients for a particular $T$ is $1$)  
\item $\alpha_{T,i}\,  \geq 0$ for all $\bfs_i \in T$.
\end{itemize}
For every $T$, we get $d+1+n-f$ linear constraints, yielding a total of
${n \choose n-f}(d+1+n-f)$ constraints
in $d+{n \choose n-f}(n-f)$ variables.
Hence, for any {\em fixed} $f$, a point in $\I$
can be found in polynomial time by solving a linear program with
the number of variables and constraints that are polynomial in $n$ and
$d$ (but not in $f$). However, when $f$ grows with $n$, the
computational complexity is high.

We note here that the above Exact BVC algorithm remains correct if the
non-faulty processes identically
choose {\em any point} in $\I$ as the decision vector.
In particular, 
as seen in the proof of Lemma \ref{lem:TverSuper}, all the Tverberg points
are contained in $\I$, therefore, one of the Tverberg points for multiset $S$ may be chosen
as the decision vector.
It turns out that,
for arbitrary $d$, currently there is no known algorithm with polynomial complexity to compute a
Tverberg point for a given multiset \cite{agarwal2004algorithms,miller2010approximate,mulzer2012approximating}.
However, in some restricted cases, efficient algorithms are
known (e.g., \cite{jadhav94}).
\comment{+++++++++++++++++++++++++++++
{\em Remark}: The above algorithm can be optimized as follows. 
Let $\mu(d,f) = max(3f+1, (d+1)f+1)$. 
If $n$ is larger than $\mu(d,f)$ then we divide the set  of processes into {\em active} 
and {\em passive} processes. There are
$\mu(d,f)$ active and $n- \mu(d,f)$ passive processes.
In phase 1, only the active processes run the exact BVC algorithm. In phase 2, first $2f+1$ active processes
send the vector that is decided in phase 1 to all the passive processes. 
Every non-faulty passive process decides
on a vector that it receives from at least $f+1$ active processes. 
Such optimizations are possible for all our algorithms in which $n$ is much bigger than the required
thershold. We will not discuss this optimization in rest of the paper.
\fillbox
++++++++++++++++++++++++++++++}

\comment{++++++++++++++++++++++++
Currently, there is no known algorithm with polynomial complexity to find
Tverberg point \cite{mulzer2012approximating,miller2010approximate}.
Even the decision problem of checking whether a given point is Tverberg point
is NP-complete if $d$ is part of the input\cite{teng1992points}.
For the special case of $f=1$ and $n=d+2$), Tverberg point is same as {\em Radon point} 
which can be calculated in $O(d^3)$ using Gaussian elimination \cite{Radon1921,clarkson1993}
+++++++++++ how does the above depend on n? +++++++++++++++++++++++++++ does this contradict
the first line? +++++++++++++++++
Thus, our algorithm for Exact BVC has polynomial complexity for $f$ equals $1$.
Tverberg point can also be calculated in linear time
when $d=2$, and $n$ is a multiple of $3$ \cite{Jadhav94}.
We have developed another algorithm for Exact BVC, which performs the Step 2 above differently
(Appendix \ref{a_exact_alternate}), which can be
implemented using linear programming with the number
of variables being $O(n{n \choose f})$. +++++++++ the complexity of the linear program
seems to be $n^2$ for arbitrary $d$ ++++++++++++++++ check ++++++++
When $f$ is small compared to $n$, more efficient algorithms
are available (Appendix \ref{a_tverberg}).
++++++++++++++++++++++}

\comment{+++++++++++++++++++++++++
We  now present another algorithm that avoids Tverberg partition.
We replace Step 2 in the Exact BVC algorithm by:\\
\\
{\em (Alternate Step 2)} For each multiset $T$, such that $T \subseteq S$ and $|T| = n -f$, \\
\hspace*{0.5in} compute the convex hull $\scripth(T)$.\\
Let $I = \cap_{T}~\scripth(T)$.\\
Each process chooses as its output $f(I) \in I$ where $f$ is a deterministic function.
\begin{theorem}
When $n \geq max(3f+1, (d+1)f+1)$, the Algorithm with alternate Step 2 solves Exact BVC in complete graphs.
\end{theorem}
\begin{proof}
Since at least one instance of $T$, say $T_G$, corresponds to non-faulty processes, and 
$I \subseteq \scripth(T)$,
it is clear that $f(I)$ satisfies validity.
We just need to show that $I$ is non-empty.
By Theorem \ref{t_tverberg}, there exists a partition $S_1,S_2,\cdots,S_{f+1}$ of
$S$ such that $\cap_{j=1}^{f}\, \scripth(S_j) \neq \emptyset$.
Let us define
\[ J = \cap_{j=1}^{f}\, \scripth(S_j)
\]

Each $T$ at the alternate Step 2 
excludes $f$ vectors of $S$. Since $S$ is partitioned into $f+1$ multisets, it follows
that the vectors from at least one $S_j$ are not excluded from $T$. Thus, there exists
some $S_j$ such that $S_j \subseteq T$.
Hence, $J\subseteq \scripth(T) $, for all $T$. This implies that $J \subseteq\cap_T~\scripth(T) = I$.
Since $J$ is non-empty, $I$ is also non-empty.
\end{proof}

The alternate Step 2 can be implemented using linear programming where the number
of variables equal $O(n{n \choose f})$. \footnote{We are searching for a vector in $d$-dimensional space that
can be expressed as convex combinations of ${n \choose n-f}$ multisets}
For any fixed $f$, the linear programming based algorithm for Exact BVC has polynomial time complexity.

In many real-life applications, $f$ may be quite small compared to $n$.
For such applications, we give a more efficient algorithm based on the notion of
weak Tverberg partition as follows.
Let $S$ be a set of $n$ points in $R^d$.  We say that $x \in R^d$ has {\em Tverberg depth } $r$ (with respect to $S$) if there is a partition of $S$ into sets $S_1,...S_r$ such that 
$x \in \cap_{i=1}^{i=r}\, \scripth(S_i)$. 
\begin{theorem}\cite{mulzer2012approximating}
\label{t:mulzer}
 Let $S$ be a set of $n$ points in $R^d$ in general position. One can compute a Tverberg point of depth $\lceil n/2^d \rceil$ for $S$ and the corresponding partition in time $O(d^{O(1)}n)$.
\end{theorem}

\begin{theorem}
Assume that $n \geq 2^d(f+1)$. Then,
there exists an algorithm that solves Step 2 of Exact BVC in 
$O(d^{O(1)}n)$ time.
\end{theorem}
\begin{proof}
If $n \geq 2^d(f+1)$, then $\lceil n/2^d \rceil \geq f+1$. Hence, if $x$ is a 
Tverberg point of depth $\lceil n/2^d \rceil$, then there is a partition of $S$ into $f+1$ sets. Since there are only $f$ faulty processes,
there exists $i$ such that $S_i$ contains points
only from non-faulty processes; hence, $\cap_{i=1}^{i=r}\, \scripth(S_i)$ contains points that are in convex
hull of vectors from nonfaulty processes. The time compexity follows from the
complexity of finding Tverberg point of depth $\lceil n/2^d \rceil$ for $S$ from 
Theorem \ref{t:mulzer}.
\end{proof}

+++++++++++++++++++++++++++++++++++++++++++++}

\section{Asynchronous Systems}
\label{s_async}

We develop a tight necessary and sufficient condition for {\em approximate} asynchronous BVC.

\subsection{Necessary Condition for Approximate Asynchronous BVC}
\label{ss_async_nec}

\begin{theorem}
\label{t_approx_nec}
$n\geq (d+2)f+1$ is necessary for approximate BVC in an asynchronous system.
\end{theorem}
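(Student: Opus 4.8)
The plan is to show that $n \geq (d+2)f+1$ is necessary for approximate BVC in an asynchronous system by a combination of two lower bounds: the synchronous bound $n \geq (d+1)f+1$ already established (which transfers to the asynchronous, and to the approximate, setting because asynchrony only makes the problem harder and approximate agreement still requires validity), and a separate argument that accounts for the extra $f$ processes forced by asynchrony. The heart of the matter is the second bound, which I would prove by a reduction to the $f=1$ case followed by the standard simulation argument used in Theorem \ref{t_exact_nec}.

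First I would isolate the case $f=1$ and argue that asynchrony effectively hides $f$ processes' inputs in a way that forces the decision vector to lie in the convex hull of every $(n-2f)$-sized subset of the non-faulty inputs rather than every $(n-f)$-sized subset. The mechanism is the familiar one for asynchronous agreement: a non-faulty process cannot distinguish a genuinely faulty (silent) process from a merely slow one, so before it can terminate it must decide based on messages from only $n-f$ processes; among those $n-f$ reporting processes up to $f$ may be Byzantine, so validity must hold against every subset obtained by discarding $f$ more inputs. Thus the decision must lie in $\Gamma(S)$-style intersections where we remove $2f$ inputs, i.e.\ in the convex hull of every subset of size $n-2f$ of the inputs. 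Setting $f=1$, the decision must lie in the intersection of the convex hulls of all $(n-2)$-subsets of the $n$ inputs.

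Next I would construct an explicit bad input configuration showing this intersection can be empty when $n = d+2$, mirroring the synchronous construction but with one fewer point available. Concretely, with $n = d+2$ I would take the $d$ standard basis vectors, the all-$0$ vector, and one additional point (so that removing any two still leaves a configuration whose convex hull misses some target point). For each coordinate $i$ I would exhibit a $(d+2-2) = d$-subset whose convex hull forces the $i$-th coordinate to $0$, and then a further subset whose convex hull excludes the origin, concluding as before that the global intersection is empty; hence $n = d+2$ is insufficient for $f=1$ under asynchrony, giving $n \geq d+3 = (d+2)\cdot 1 + 1$. The delicate point here is choosing the inputs so that the ``miss'' survives the removal of $2f$ rather than $f$ points — one must verify that no matter which two coordinates' worth of information is suppressed, the surviving convex hull still excludes a common target.

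Finally, for general $f > 1$ I would invoke the standard simulation/grouping argument (as in the proof of Theorem \ref{t_exact_nec}): partition a hypothetical $n = (d+2)f$ system into $d+2$ groups of $f$ processes each, let each group be simulated by one process in a $(d+2)$-process, single-fault instance, and observe that a correct algorithm tolerating $f$ faults among $(d+2)f$ processes would yield a correct algorithm tolerating one fault among $d+2$ processes, contradicting the $f=1$ result. I expect the main obstacle to be making the asynchronous indistinguishability argument rigorous — precisely justifying that validity must hold against every $(n-2f)$-subset rather than merely every $(n-f)$-subset — since this is exactly where the extra $f$ compared to the synchronous bound comes from, and it relies on an adversarial scheduling/scenario argument rather than a purely combinatorial-geometric one. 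The author likely defers the careful indistinguishability reasoning to an appendix, so I would sketch the scenario-swapping chain (slow-vs-faulty executions that non-faulty processes cannot tell apart) and then lean on the clean geometric construction for the quantitative conclusion.
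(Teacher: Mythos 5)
Your scaffolding matches the paper's proof closely — the $f=1$ reduction with a slow-versus-crashed $p_{d+2}$, the observation that validity must survive the removal of $2f$ inputs ($f$ hidden by asynchrony plus $f$ potentially Byzantine among those heard), a basis-vectors-plus-origin configuration, and the simulation argument for $f>1$ are all exactly the paper's route. But your concluding step contains a genuine gap: you propose to show that the \emph{global} intersection of convex hulls of all $(n-2f)$-subsets is empty and to conclude impossibility ``as before.'' That inference is only valid for \emph{exact} consensus, where the agreement condition forces all non-faulty processes to decide a single common point lying in every hull. Approximate BVC has no such requirement: each process $p_i$ may decide a different point, constrained only to lie in its \emph{own} feasible set (the intersection over scenarios $p_i$ cannot rule out, all of which retain $\bfx_i$ since $p_i$ knows itself to be non-faulty), and these per-process sets are individually non-empty even when their common intersection is empty. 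Emptiness of the common intersection therefore contradicts nothing — two disjoint sets can easily contain points within $\epsilon$ of each other coordinate-wise.

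The paper closes this gap quantitatively rather than set-theoretically: it scales the inputs by the agreement parameter (the $i$-th input is $4\epsilon$ in coordinate $i$, zero elsewhere; $\bfx_{d+1}=\bfx_{d+2}=0$) and shows that each process $p_i$'s feasible set is the \emph{singleton} $\{\bfx_i\}$ — the hulls $\scripth(X_i^j)$ for $j\leq d$, $j\neq i$, pin all coordinates other than $i$ to zero, while the hull excluding the all-zero input forces the remaining coordinate — so each process is compelled to decide its own input, and these forced decisions differ pairwise by $4\epsilon$ in some coordinate, violating $\epsilon$-agreement. Your construction omits the $\epsilon$-scaling entirely and leaves the ``one additional point'' unspecified, so even repaired it would not yield the needed separation. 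To fix your argument, replace ``the intersection is empty'' with ``each per-process intersection is a singleton, and the singletons are pairwise more than $\epsilon$ apart in some coordinate''; the rest of your outline (including the $f>1$ simulation) then goes through as in the paper.
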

\begin{proof}
We first consider the case of $f=1$. 
Suppose that a correct algorithm exists for $n=d+2$.
Denote by $\bfx_k$ the input vector at each process $p_k$.
Now consider a process $p_i$, where $1\leq i\leq d+1$. 
Since a correct algorithm must tolerate one failure, process $p_i$ must terminate in all executions in which process $p_{d+2}$ does not take any steps.
Suppose that all the processes are non-faulty, but process $p_{d+2}$ does not take
any steps until all the other processes terminate.
At the time when process $p_i$ terminates ($1\leq i\leq d+1$),
 it cannot distinguish between the following $d+1$ scenarios:
\begin{itemize}
\item Process $p_{d+2}$ has crashed: In this case, to satisfy the {\em validity} condition, the decision of process
$p_i$ must be in the convex hull of the inputs of processes
$p_1,p_2,\cdots,p_{d+1}$. That is, the decision vector must be in the convex hull of $X_i^{d+2}$ defined below.
\begin{eqnarray}
 X_i^{d+2} & = & \{\bfx_k~:~1\leq k\leq d+1\} \label{e_d+2}
\end{eqnarray}
$\bfx_{d+2}$ is not included above, because until process $p_i$ terminates,
$p_{d+2}$ does not take any steps (so $p_i$ cannot learn any information
about $\bfx_{d+2}$).

\item Process $p_j$ ($j\neq i$, $1\leq j\leq d+1$) is faulty, and process $p_{d+2}$ is slow, and hence $p_{d+2}$ has not taken any steps yet:
Recall that we are considering $p_i$ at the time when it terminates. Since
process $p_{d+2}$ has not taken any steps yet, process $p_i$ cannot have
any information about the input at $p_{d+2}$. Also, 
in this scenario $p_j$ may be faulty, therefore, process $p_i$
cannot trust the correctness of the input at $p_j$.
Thus, to satisfy the validity condition,
the decision of process $p_i$ must be in the convex hull of $X_i^j$
defined below.
\begin{eqnarray}
 X_i^j & = & \{\bfx_k~:~k\neq j \mbox{~and~} 1\leq k\leq d+1\} \label{e_j}
\end{eqnarray}
\end{itemize}
The decision vector of process $p_i$ must be valid independent of which of the
above $d+1$ scenarios actually occurred.
Therefore, observing that $\scripth(X_i^{d+2})\supseteq \scripth(X_i^j)$,
where $j\neq i$,
we conclude that the decision vector must be in
\begin{eqnarray}
\cap_{j\neq i, 1\leq j\leq d+1}~ \scripth(X_i^j)
\label{e_valid}
\end{eqnarray}
%
Recall that $\epsilon>0$ is the parameter of the $\epsilon$-agreement condition
in Section \ref{s_intro}.
For $1\leq i\leq d$, suppose that the $i$-th element of input vector $\bfx_i$ is $4\epsilon$, and the remaining $d-1$ elements are 0. Also
suppose that $\bfx_{d+1}$ and $\bfx_{d+2}$ are both equal to the all-0 vector. 

Let us consider process $p_{d+1}$.
In this case, $\scripth(X_{d+1}^j)$ for $j\leq d$ only contains
vectors whose $j$-th element is 0. Thus, the intersection of
all the convex hulls in (\ref{e_valid}) only contains the all-0 vector,
which, in fact, equals $\bfx_{d+1}$.
%
Thus, the decision vector of process $p_{d+1}$ must be equal to $\bfx_{d+1}$.
We can similarly show that for each $p_i$, $1\leq i\leq d+1$, 
the intersection in (\ref{e_valid}) only contains vector $\bfx_i$,
and therefore,
the decision vector
of process $p_i$ must be equal to its input $\bfx_i$.
The input vectors at each pair of processes in $p_1,\cdots,p_{d+1}$ differ by
$4\epsilon$ in at least one element. This implies that the $\epsilon$-agreement condition
is not satisfied.
Therefore, $n=d+2$ is not sufficient for $f=1$.
It should be easy to see that $n\leq d+2$ is also not sufficient.

For the case when $f>1$, by using a {\em simulation} 
similar to the proof of Theorem \ref{t_exact_nec}, 
we can now show that $n\leq (d+2)f$ is not sufficient.
Thus, $n\geq (d+2)f+1$ is necessary for $f\geq 1$.
(For $f=0$, the necessary condition holds trivially.)
\end{proof}

\subsection{Sufficient Condition for Approximate Asynchronous BVC}
\label{ss_async_suff}

We will prove that $n\geq (d+2)f+1$ is sufficient by proving the
correctness of an algorithm presented in this section.
The proposed algorithm
executes in asynchronous rounds. Each process $p_i$
maintains a local state $\bfv_i$, which is a $d$-dimensional vector.
We will refer to the value of $\bfv_i$
at the {\em end} of the $t$-th round performed by process $p_i$ as $\bfv_i[t]$.
Thus, $\bfv_i[t-1]$ is the value of $\bfv_i$ at the {\em start}
of the $t$-th round of process $p_i$.
The initial value of $\bfv_i$, namely $\bfv_i[0]$, is equal to $p_i$'s {\em input} vector,
denoted as $\bfx_i$.
The messages sent by each process anytime during its $t$-th round are tagged
by the round number $t$.
This allows a process $p_i$ in its round $t$ to determine,
despite the asynchrony, whether
a message received from another process $p_j$ was sent by $p_j$ in 
$p_j$'s round $t$.

The proposed algorithm is obtained by suitably modifying a {\em scalar} consensus
algorithm presented by Abraham, Amit and Dolev \cite{Abraham_optimalresilience04}
to achieve asynchronous approximate Byzantine scalar consensus among $3f+1$ processes.
We will refer to the algorithm in \cite{Abraham_optimalresilience04} as the AAD
algorithm. We first present a brief overview of the AAD algorithm, and describe
its properties. We adopt our notation above when describing the AAD algorithm
(the notation differs from \cite{Abraham_optimalresilience04}).  
One key difference is that, in our proposed algorithm $\bfv_i[t]$ is a vector, whereas in
AAD description below, it is considered a scalar. 
The AAD algorithm may be viewed as consisting of three components:
\begin{enumerate}
\item {\em AAD component \#1:} In each round $t$, the AAD algorithm requires each process
to communicate its state $\bfv_i[t-1]$ to other processes using a mechanism
that achieves the 
properties described next.
AAD ensures
that each non-faulty process $p_i$ in its round $t$ obtains a set $B_i[t]$
containing at least $n-f$ tuples of the form $(p_j,\bfw_j,t)$,
such that
the following properties hold:
\begin{itemize}
\item (Property 1)~~ For any two non-faulty processes $p_i$ and $p_j$:
\begin{eqnarray}
\label{e_overlap}
|B_i[t]\cap B_j[t]|\,\geq\, n-f
\end{eqnarray}
That is, $p_i$ and $p_j$ learn at least $n-f$ identical tuples.
\item (Property 2)~~ If $(p_l,\bfw_l,t)$ and $(p_k,\bfw_k,t)$ are both in $B_i[t]$, then $p_l\neq p_k$. That is,
	$B_i[t]$ contains at most one tuple for each process.
\item (Property 3)~~ If $p_k$ is non-faulty, and $(p_k,\bfw_k,t)\in B_i[t]$, then $\bfw_k=\bfv_k[t-1]$.
	That is, for any non-faulty process $p_k$, $B_i[t]$ may only contain 
	the tuple $(p_k,\bfv_k[t-1],t)$. (However, it is possible that, corresponding
	to some non-faulty process, $B_i[t]$ does not contain a tuple at all.)
\end{itemize}

\item {\em AAD component \#2:} Process $p_i$, having obtained set $B_i[t]$ above, computes its new state $\bfv_i[t]$
	as a function of the tuples in $B_i[t]$.
	The primary difference between our proposed algorithm and AAD is in this step.
	The computation of $\bfv_i[t]$ in AAD is designed to be correct for scalar inputs (and scalar decision),
	whereas our approach applies to $d$-dimensional vectors.

\item {\em AAD component \#3:} AAD also includes a sub-algorithm that allows the non-faulty processes to determine
	when to terminate their computation. Initially, the processes 
	cooperate to
	estimate a quantity $\delta$ as a function of the input
	values at various processes.
 Different non-faulty processes may estimate different values for $\delta$,
 	since the estimate  is affected by the behavior of faulty processes and message delays.
	Each process then uses $1+\lceil\log_2\frac{\delta}{\epsilon}\rceil$ as the
	threshold
on the minimum number of rounds necessary for the non-faulty processes
	to converge within $\epsilon$ of each other.
	The base of the logarithm above is 2,
	because the range of the values at
	the non-faulty processes is shown to shrink by a factor of $\frac{1}{2}$
	after each asynchronous round of AAD \cite{Abraham_optimalresilience04}.
        Subsequently, when the processes reach respective thresholds on the rounds,
        they exchange additional messages. After an adequate number of processes announce
	that they have reached their threshold, 
	all the non-faulty processes may terminate.
\end{enumerate}

It turns out that the Properties 1, 2 and 3 hold even if {\em Component \#1}
of AAD is used with $\bfv_i[t]$ as a {\em vector}. We exploit these properties in our algorithm
below.
\noindent
The proposed algorithm below uses a function $\Phi$, which takes a
set, say set $B$, containing tuples of the form $(p_k,\bfw_k,t)$, and
returns a multiset containing the points (i.e., $\bfw_k$). Formally,
\begin{eqnarray}
\Phi(B) & = & \{ \bfw_k ~ : ~ (p_k,\bfw_k,t)\in B\} \label{e_Phi}
\end{eqnarray}

A mechanism similar to that in AAD may potentially be used to achieve termination for the approximate BVC algorithm below as well. The main difference
from AAD would be in the manner in which the threshold on the number of
rounds necessary is computed. However, for brevity, we simplify
our algorithm by 
assuming that there exists an upper bound $U$ and a lower bound
$\nu$ on the values of the $d$ elements in the inputs vectors
at non-faulty processes, and that these bounds are known 
{\em a priori}. Thus, all the elements in each input vector will be $\leq U$
and $\geq \nu$. This assumption holds in many practical systems, because the
input vector elements represent quantities that are constrained. For instance,
if the input vectors are probability vectors, then $U=1$ and $\nu=0$.
If the input vectors represent locations in 3-dimensional space occupied
by mobile robots, then $U$ and $\nu$ are determined by the boundary of
the region in which the robots are allowed to operate.
The advantage of the AAD-like solution over our simple approach is
that, depending on the actual inputs, the algorithm may potentially terminate
sooner, and the AAD mechanism prevents faulty processes from causing the
non-faulty processes to run longer than necessary. However, the simple static
approach for termination presently suffices to prove the correctness of our
approximate BVC algorithm, as shown later.

\vspace*{6pt}
\hrule \vspace*{2pt}
{\bf
Asynchronous Approximate BVC algorithm
}
 for $n\geq (d+2)f+1$\,:
\vspace*{4pt}
\hrule
\begin{enumerate}
\item In the $t$-th round, each non-faulty process uses the mechanism in
{\em Component \#1} of the AAD algorithm to obtain
a set $B_i[t]$ containing at least $n-f$ tuples, such that $B_i[t]$ satisfies properties 1, 2, and 3 described
earlier for AAD.
While these properties
were proved in \cite{Abraham_optimalresilience04} for scalar states, the correctness of the properties
also holds when $\bfv_i$ is a vector.

\item In the $t$-th round, after obtaining set $B_i[t]$, process $p_i$ computes its new
	state $\bfv_i[t]$ as follows. Form a multiset $Z_i$ using the steps below:
	\begin{itemize}
	\item Initialize $Z_i$ as empty.
	\item For each $C\subseteq B_i[t]$ such that $|C|=n-f\geq (d+1)f+1$,
		add to $Z_i$ one deterministically chosen point from $\Gamma(\Phi(C))$.
			Since $|\Phi(C)|=|C|\geq(d+1)f+1$, by Lemma \ref{lem:TverSuper}, $\Gamma(\Phi(C))$ is non-empty. 
	\end{itemize}
Note that $|Z_i|~=~{|B_i[t]| \choose n-f}~\leq~ {n \choose n-f}$.
Calculate \begin{eqnarray}
\bfv_i[t]~=~\frac{ \sum_{\bfz\in Z_i}~ \bfz }{|Z_i|}\label{e_algo_Z}
\end{eqnarray}

\item Each non-faulty process terminates after $1+\lceil\log_{1/(1-\gamma)}\,
	\frac{U-\nu}{\epsilon}\rceil$ rounds, where $\gamma$ ($0<\gamma<1$) is a
	constant defined later in (\ref{e_gamma}).
	Recall that $\epsilon$ is the parameter of the $\epsilon$-agreement
	condition.
\end{enumerate}
\hrule

In Step 2 above, we consider
$\nchoosek{|B_i[t]|}{n-f}$
subsets $C$ of $B_i[t]$, each subset being of size $n-f$.
As elaborated in Appendix \ref{a_opt}, it is
possible to reduce the number of subsets explored to just $n-f$.
This optimization
will reduce the computational complexity of Step 2, but it is not necessary
for correctness of the algorithm.

\begin{theorem}
\label{t_approx_suff}
$n\geq (d + 2)f + 1$ is sufficient for approximate BVC in an
asynchronous system.
\end{theorem}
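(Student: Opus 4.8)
The plan is to establish the three conditions of Approximate BVC---termination, validity, and $\epsilon$-agreement---with nearly all the effort devoted to $\epsilon$-agreement. Termination is immediate: Step~3 runs the algorithm for a fixed finite number of rounds, and each round completes because \emph{Component~\#1} of AAD delivers the set $B_i[t]$ of at least $n-f$ tuples; this is legitimate since $n\geq(d+2)f+1\geq 3f+1$ for $d\geq 1$, so AAD's guarantees (Properties~1--3) apply verbatim with vector-valued states.

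For validity I would show, by induction on the round $t$, that every non-faulty state $\bfv_i[t]$ lies in $H:=\scripth(\{\bfx_k : p_k\text{ is non-faulty}\})$. The base case holds because $\bfv_i[0]=\bfx_i$. For the step, fix any $C\subseteq B_i[t]$ with $|C|=n-f$. At most $f$ of its tuples come from faulty processes, so among the $(n-2f)$-element subsets $T\subseteq\Phi(C)$ in the definition of $\Gamma(\Phi(C))$ there is one, $T^{*}$, made up solely of the true previous states $\bfv_k[t-1]$ of non-faulty processes (Property~3), and these lie in $H$ by the induction hypothesis, exactly as in the proof of Theorem~\ref{t_exact_suff}. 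Hence $\Gamma(\Phi(C))\subseteq\scripth(T^{*})\subseteq H$, and $\Gamma(\Phi(C))\neq\emptyset$ by Lemma~\ref{lem:TverSuper} because $|\Phi(C)|=n-f\geq(d+1)f+1$. Thus every point added to $Z_i$ lies in $H$, and so does their average $\bfv_i[t]$ by convexity. At the final round this yields validity.

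The core is $\epsilon$-agreement, which I would derive from a coordinate-wise contraction of the spread. For coordinate $l$ and round $t$, let $M_l[t]$ and $m_l[t]$ denote the maximum and minimum of $(\bfv_k[t])_l$ over non-faulty $p_k$. The validity argument shows each point of any $Z_i$ is a convex combination of non-faulty states $\bfv_k[t-1]$, so $\bfv_i[t]=\sum_k w_{ik}\,\bfv_k[t-1]$ with convex weights, and in particular every such point has its $l$-th coordinate in $[m_l[t-1],M_l[t-1]]$. The decisive step is to exhibit, for any two non-faulty $p_i,p_j$, a single point entering both averages with non-negligible weight. By Property~1, $|B_i[t]\cap B_j[t]|\geq n-f$, so there is a subset $C^{*}\subseteq B_i[t]\cap B_j[t]$ with $|C^{*}|=n-f$; both processes examine $C^{*}$ in Step~2 and, since the tuples of $C^{*}$ are identical at both and the selection is deterministic, add the \emph{same} point $\bfz^{*}\in\Gamma(\Phi(C^{*}))$. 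Because $|Z_i|={|B_i[t]|\choose n-f}\leq{n\choose n-f}$ (Property~2 gives $|B_i[t]|\leq n$), the point $\bfz^{*}$ carries weight at least $\gamma:=1/{n\choose n-f}$ in each of $\bfv_i[t]$ and $\bfv_j[t]$.

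To finish, for each coordinate $l$ I would write $(\bfv_i[t])_l=\gamma_i(\bfz^{*})_l+(1-\gamma_i)a_i$ with $\gamma_i\geq\gamma$ and $a_i\in[m_l[t-1],M_l[t-1]]$, and similarly for $p_j$; since $(\bfz^{*})_l$ also lies in $[m_l[t-1],M_l[t-1]]$, a short estimate gives $|(\bfv_i[t])_l-(\bfv_j[t])_l|\leq(1-\gamma)(M_l[t-1]-m_l[t-1])$, whence $M_l[t]-m_l[t]\leq(1-\gamma)\,(M_l[t-1]-m_l[t-1])$. Iterating from $M_l[0]-m_l[0]\leq U-\nu$ drives the spread of every coordinate below $\epsilon$ within $1+\lceil\log_{1/(1-\gamma)}\frac{U-\nu}{\epsilon}\rceil$ rounds, matching Step~3, so $\epsilon$-agreement holds with $\gamma$ as in (\ref{e_gamma}). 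I expect this contraction to be the main obstacle, and the two delicate points are (i) using the overlap of Property~1 to force a \emph{common} subset $C^{*}$, and hence via deterministic selection a common point $\bfz^{*}$ shared by both processes, and (ii) lower-bounding its weight by $1/{n\choose n-f}$, which is what turns a merely shared point into genuine geometric convergence. The threshold $n\geq(d+2)f+1$ enters precisely as the requirement $n-f\geq(d+1)f+1$ that keeps every $\Gamma(\Phi(C))$ non-empty through Lemma~\ref{lem:TverSuper}.
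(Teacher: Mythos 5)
Your proof is correct and follows the paper's own argument almost step for step: the same use of AAD's Properties 1--3 with vector states, the same appeal to Lemma~\ref{lem:TverSuper} via $n-f\geq(d+1)f+1$, the same extraction of a common subset $C^{*}\subseteq B_i[t]\cap B_j[t]$ yielding a shared deterministically chosen point, the same per-coordinate contraction of the spread $\Omega_l[t]-\mu_l[t]$, and the same fixed round threshold from the a priori bounds $U,\nu$. The one genuine deviation is in the anchoring of the contraction. The paper does not anchor on the shared point itself: it first writes $\bfz_{ij}$ as a convex combination of non-faulty states $\bfv_k[t-1]$, extracts a non-faulty process $p_{g(i,j)}$ whose state carries weight $\geq\frac{1}{n}$ in that combination, and then anchors both $\bfv_i[t]$ and $\bfv_j[t]$ on the common state $\bfv_{g(i,j)}[t-1]$ with weight $\geq\frac{1}{n\,\nchoosek{n}{n-f}}$, which is exactly the $\gamma$ of (\ref{e_gamma}). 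You instead anchor directly on $\bfz^{*}$ with weight $\geq\frac{1}{\nchoosek{n}{n-f}}$, which is legitimate --- your validity argument guarantees $(\bfz^{*})_l\in[\mu_l[t-1],\Omega_l[t-1]]$, which is all the sandwich estimate needs --- and it both shortens the argument (no extraction of $g(i,j)$) and improves the contraction constant by a factor of $n$. Since $1-\frac{1}{\nchoosek{n}{n-f}}\leq 1-\gamma$, your faster contraction still justifies the round count $1+\lceil\log_{1/(1-\gamma)}\frac{U-\nu}{\epsilon}\rceil$ hard-coded in Step~3, so nothing breaks. What the paper's heavier choice buys is a representation of $\bfv_i[t]$ as a convex combination of the states $\{\bfv_1[t-1],\dots,\bfv_m[t-1]\}$ with a common heavy column, i.e., the row-stochastic-matrix (coefficient of ergodicity) formulation sketched in the commented-out portions of the source; for the direct coordinate-wise estimate actually used in Appendix~\ref{a_agreement}, your simpler anchor suffices.
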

\begin{proof}
Without loss of generality, suppose that $m$ processes $p_1,p_2,\cdots p_m$
are non-faulty, where $m\geq n-f$, and the remaining $n-m$ processes 
are faulty. \vspace*{4pt}
In the proof,
we will often omit the round index $[t]$ in $B_i[t]$, since the index should
be clear from the context.
In this proof, we consider the steps taken by the non-faulty processes in their respective
$t$-th rounds, where $t>0$.
We now define a {\em valid} point. The definition is used later in the proof.
\begin{definition}
\label{d_valid}
A point $\bfr$ is said to be {\em valid} if there exists a representation
of $\bfr$ as a convex combination of $\bfv_k[t-1]$, $1\leq k\leq m$.
That is, there exist constants $\beta_k$, such that $0\leq \beta_k\leq 1$
and $\sum_{1\leq k\leq m}~\beta_k = 1$, and
\begin{eqnarray}
\bfr & = & \sum_{1\leq k\leq m}~ \beta_k\,\bfv_k[t-1]
\label{e_cc}
\end{eqnarray}
$\beta_k$ is said to be the {\bf weight} of $\bfv_k[t-1]$ in the above convex combination.
\end{definition}
In general, there may exist multiple such convex combination representations of a
{\em valid}\, point $\bfr$. 
Observe that at least one of the weights in any such convex combination must be $\geq\frac{1}{m}
\geq \frac{1}{n}$.

For the convenience of the readers,
we break up the rest of this proof into three parts.
\paragraph{Part I:}

At a non-faulty process $p_i$,
consider any $C\subseteq B_i$ such that $|C|=n-f$ (as in Step 2 of the algorithm).
Since $|\Phi(C)|=|C|=n-f\geq (d+1)f+1$,
by Lemma \ref{lem:TverSuper}, 
$\Gamma(\Phi(C))\neq\emptyset$.
So $Z_i$ will contain a point from $\Gamma(\Phi(C))$ for each $C$.


Now, $C\subseteq B_i$,
$|\Phi(C)|=n-f$, and there are at most $f$ faulty processes.
Then
Property 3 of $B_i$ implies that
at least one $(n-2f)$-size subset
of $\Phi(C)$ must also be a subset of $\{\bfv_1[t-1],\bfv_2[t-1],\cdots,\bfv_m[t-1]\}$,
i.e., contain only the state of non-faulty processes. Therefore, 
all the points in $\Gamma(\Phi(C))$ must be {\em valid} (due to (\ref{e_I})
and Definition \ref{d_valid}).
This observation is true for each set $C$ enumerated in Step 2.
Therefore, all the points in $Z_i$ computed in Step 2 must be valid.
(Recall that we assume 
processes $p_1,\cdots,p_m$ are non-faulty.)

\vspace*{6pt}
\hrule
\paragraph{Part II:}
Consider any two non-faulty processes $p_i$ and $p_j$.

\begin{itemize}
\item{\em Observation 1:}
As argued in Part I, all the points in $Z_i$ are valid.
Therefore, all the points in $Z_i$ can be expressed as convex combinations
of the state of non-faulty processes, i.e., $\{\bfv_1[t-1],\cdots,\bfv_m[t-1]\}$.
Similar observation holds for all the points in $Z_j$ too.

\item{\em Observation 2:}
By Property 1 of $B_i$ and $B_j$,\footnote{As noted earlier, we omit the
round index $[t]$ when discussing the sets $B_i[t]$ and $B_j[t]$ here.}
 \[ |B_i\cap B_j|\,\geq\, n-f.\]
Therefore, there exists a set $C_{ij}\subseteq B_i\cap B_j$
such that $|C_{ij}|=n-f$.
Therefore, $Z_i$ and $Z_j$ both contain one identical point from $\Gamma(\Phi(C_{ij}))$.
Suppose that this point is named $\bfz_{ij}$.
As shown in Part I above, $\bfz_{ij}$ must be {\em valid}. 
Therefore, there exists a convex combination representation of $\bfz_{ij}$ in terms of the
states $\{\bfv_1[t-1],\bfv_2[t-1],\cdots,\bfv_m[t-1]\}$ of non-faulty processes.
Choose any one such convex combination.
There must exist a non-faulty process, say $p_{g(i,j)}$, 
such that the weight associated with $\bfv_{g(i,j)}[t-1]$ in the
convex combination
for $\bfz_{ij}$ is $\geq\frac{1}{m}\geq\frac{1}{n}$.
We can now make the next observation.\footnote{Note that, to simplify
the notation somewhat, the notation $g(i,j)$ does not make the round index $t$ explicit.
However, it should be noted that $g(i,j)$ for processes $p_i$ and $p_j$
can be different in different rounds. 
}

\item {\em Observation 3:}
Recall from (\ref{e_algo_Z}) that $\bfv_i[t]$ is computed as the average of the points in $Z_i$, 
and $|Z_i|={|B_i|\choose n-f}\leq {n\choose n-f}$.
By {\em Observations 1},
all the points in $Z_i$ are valid, and by {\em Observation 2},
$\bfz_{ij}\in Z_i$.
These observations together imply that
$\bfv_i[t]$ is also valid, and
{\em there exists} a representation of $\bfv_i[t]$
as a convex combination 
of $\{\bfv_1[t-1],\cdots,\bfv_m[t-1]\}$,
wherein the weight of $\bfv_{g(i,j)}[t-1]$
is $\geq \frac{1}{n\, \nchoosek{|B_i|}{n-f}}\geq\frac{1}{n\,\nchoosek{n}{n-f}}$.
Similarly, we can show that 
{\em there exists} a representation
of $\bfv_j[t]$ as a convex combination of
$\{\bfv_1[t-1],\cdots,\bfv_m[t-1]\}$,
wherein 
the weight of $\bfv_{g(i,j)}[t-1]$
is $\geq \frac{1}{n\,\nchoosek{n}{n-f}}$.
Define
\begin{eqnarray} \gamma = \frac{1}{n\, \nchoosek{n}{n-f}} \label{e_gamma} \end{eqnarray}
Consensus is trivial for $n=1$, so we consider finite $n>1$.
Therefore, $0<\gamma<1$.
\end{itemize}

\vspace*{2pt}
\hrule
\paragraph{Part III:}
\comment{++++++++++++++++++++++++++++++++++
The claims in part II then imply that there exists an $m\times m$
matrix $\bfM[t]$ such that the following conditions hold for $1\leq i\leq m$,
where $\bfM_{ij}[t]$ denotes the element of matrix $\bfM[t]$ in its
$i$-th row and $j$-th column. Recall that $t>0$.
\begin{eqnarray}
\bfv_i[t] & = & \sum_{j=1}^m \bfM_{ij}[t]\,\bfv_j[t-1], ~~~~\mbox{where} \label{e_iter_i} \\
\sum_{j=1}^m \bfM_{ij}[t] & = & 1 , \label{e_stochastic_1} \\
\bfM_{ij}[t] & \geq & 0, ~~~~ 1\leq j\leq m ~~~~\mbox{and} \label{e_stochastic_2} \\ 
\bfM_{i,g(i,j)}[t] & \geq & \gamma, ~~~~1\leq j\leq m  \label{e_nonzero_g} 
\end{eqnarray}
(\ref{e_iter_i}), (\ref{e_stochastic_1})
and (\ref{e_stochastic_2}) are true because,
as per {\em Observation 1}, $\bfv_i[t]$ is a convex combination of
$\{\bfv_1[t-1],\cdots,\bfv_m[t-1]\}$.
(\ref{e_nonzero_g}) follows from {\em Observation 3}.
++++++++++++++++++++++}

\comment{+++++++++++++++++++
Claim \ref{c_scrambling} below follows.
\begin{claim}
\label{c_scrambling}
Matrix $\bfM[t]$ is a square row stochstic matrix, such that,
for each pair of rows in $\bfM$, there exists
a column in which the corresponding elements of both
the rows are lower bounded by $\gamma$, $0<\gamma< 1$.
\end{claim}
+++++++++++++++++++++}
\comment{++++++++++++++++++++
\begin{claimproof}
$\bfM$ is a $m\times m$ square matrix.
A matrix is {\em row stochastic} if all its elements are non-zero,
and the elements in each row add up to 1.
(\ref{e_stochastic_1}) and (\ref{e_stochastic_2}) imply that $\bfM$ is {\em row stochastic}.
(\ref{e_nonzero_g}) and the observation that $g(i,j)=g(j,i)$
imply that, in the $g(i,j)$-th column,
$i$-th and $j$-th rows both include elements
lower bounded by $\gamma$, $0<\gamma< 1$.
\end{claimproof}
++++++++++++++}

Observation 3 above implies that for any $\tau>0$,
$\bfv_i[\tau]$ is a convex combination of
$\{\bfv_1[\tau-1],\cdots,\bfv_m[\tau-1]\}$. Applying this observation for
$\tau=1,2,\cdots,t$,
we can conclude that $\bfv_i[t]$ is a convex combination of $\{\bfv_1[0],
\cdots,\bfv_m[0]\}$, implying that the proposed algorithm satisfies the
{\bf validity} condition for approximate consensus. (Recall that
$\bfv_k[0]$ equals process $p_k$'s input vector.)  

\comment{+++++++++++++++++++++++
Let us denote by $\bfv[t]$ a column vector such that the $i$-the element of the
column vector is $\bfv_i[t]$. Then
by ``stacking'' (\ref{e_iter_i}) for all $i$ in $1\leq i\leq m$, we can
obtain the following
{\em matrix form} of the state update peformed by the non-faulty
processes in their $t$-th round, $t>0$.
\begin{eqnarray}
\bfv[t] & = & \bfM[t] \, \bfv[t-1] \label{e_iter}
\end{eqnarray}
++++++++++++++++++++++++}

\comment{++++++++++++++++++++++++++++++++++++++++++++++++
By ``unrolling'' (\ref{e_iter}) from rounds 1 through $t$, we get:
\begin{eqnarray}
\bfv[t] & = & (\bfM[t], \bfM[t-1],\cdots \, \bfM[1])~ \bfv[0] \label{e_unroll_0}
\end{eqnarray}
Product of row stochastic matrices is also a row stochastic matrix.
Therefore, the matrix product in (\ref{e_unroll_0}) is a row stoachastic matrix
(i.e., the elements in each row of the matrix product
are non-negative, and add to 1).
Thus, for each non-faulty process $p_i$,
$\bfv_i[t]$ is in the convex hull of the elements of $\bfv[0]$, proving
the {\bf validity} condition for vector consensus.

As shown in Appendix \ref{a_ergodicity_T},
Claim \ref{c_ergodicity_T} below can be proved using the results from \cite{Wolfowitz}.
\begin{claim}
\label{c_ergodicity_T}
Consider a sequence of $m\times m$ row stochastic matrices $\bfM[1],\bfM[2],\cdots,\bfM[t]$, such
that each $\bfM[\tau]$, $1\leq \tau\leq t$, satisfies Claim \ref{c_scrambling}.
Define $\bfT[t]~=~ \bfM[t]\bfM[t-1]\cdots \bfM[1]$. Then
$\bfT[t]$ is a $m\times m$ row stochastic matrix such that,
for $1\leq i,j,k\leq m$, 
\begin{eqnarray}\parallel \bfT_{ik}[t]-\bfT_{jk}[t]\parallel~\leq~\, (1-\gamma)^t. \label{e_ergodicity_T}\end{eqnarray}
where $\parallel . \parallel$ operator yields the absolute value of its
scalar parameter.
\end{claim}
Appendix \ref{a_ergodicity_T} provides details of the proof of Claim \ref{c_ergodicity_T}.
Since $0< \gamma< 1$, for large enough $t$, Claim \ref{c_ergodicity_T}
implies that the rows of $\bfT[t]$ are approximately equal,
and in the limit as $t\rightarrow\infty$,
all the rows of $\bfT[t]$ become identical.

+++++++++++++++++++++++++++++++++++++++++++++++}

Let $\bfv_{il}[t]$ denote the $l$-th element of the vector state $\bfv_i[t]$ of
process $p_i$.
Define $\Omega_l[t] = \max_{1\leq k\leq m}~ \bfv_{kl}[t]$,
the maximum value of $l$-th element of the vector state of non-faulty processes.
Define $\mu_l[t] = \min_{1\leq k\leq m}~ \bfv_{kl}[t]$, 
the minimum value of $l$-th element of the vector state of non-faulty processes.
Appendix \ref{a_agreement} proves, using 
{\em Observations 1} and {\em 3} above, that
\begin{eqnarray}
\Omega_l[t] - \mu_l[t] & \leq & (1-\gamma)\, \left(\Omega_l[t-1]-\mu_l[t-1]\right),~~~~
\mbox{for}~ 1\leq l\leq d
\label{e_convergence}
\end{eqnarray}
By repeated application of (\ref{e_convergence})
we get
\begin{eqnarray}
\Omega_l[t] - \mu_l[t] & \leq & (1-\gamma)^t\, \left(\Omega_l[0]-\mu_l[0]\right)
\end{eqnarray}
Therefore, for a given $\epsilon>0$, if 
\begin{eqnarray} t&>&\log_{1/(1-\gamma)}~~ \frac{
\Omega_l[0]-\mu_l[0]
}{\epsilon},
\label{e_t_threshold}
\end{eqnarray}
then
\begin{eqnarray} \Omega_l[t]-\mu_l[t] ~ <~\epsilon. \label{e_rho_epsilon} \end{eqnarray}
Since (\ref{e_t_threshold}) and (\ref{e_rho_epsilon}) hold for $1\leq l\leq d$,
and $U\geq \Omega_l[0]$ and $\nu\leq \mu_l[0]$ for $1\leq l\leq d$,
if each non-faulty process
terminates after $1+\lceil\log_{1/(1-\gamma)}\, \frac{U-\nu}{\epsilon}\rceil$
rounds, $\epsilon$-agreement is ensured. As shown previously, validity condition
is satisfied as well.
Thus, the proposed algorithm is correct, and $n\geq (d+2)f+1$ is
sufficient for approximate consensus in asynchronous systems.
\comment{+++++++++++++++++++++++++
\vspace*{8pt}
\hrule
\paragraph{Part IV:} In this part, we address the {\em termination} condition.
A mechanism similar to that in AAD may potentially be used to achieve termination for the approximate BVC algorithm as well. The main difference
from AAD would be in the manner in which the threshold on the number of
rounds necessary is computed. For brevity, we present a simpler solution
in this paper. We assume that there exists an upper bound $U$ and a lower bound
$\nu$ on the values of the $d$ elements in the inputs vectors
at non-faulty processes, and that these bounds are known 
{\em a priori}. Thus, all the elements in each input vector will be $\leq U$
and $\geq \nu$. This assumption holds in many practical systems, because the
input vector elements represent quantities that are constrained. For instance,
if the input vectors are probability vectors then $U=1$ and $\nu=0$.
If the input vectors represent coordinates in 3-dimensional space occupied
by mobile robots, then $U$ and $\nu$ are determined by the boundary of
the region in which the robots are allowed to operate.
The advantage of the AAD-like solution over the simpler (static) approach is
that, depending on the actual inputs, the algorithm may potentially terminate
sooner, and the AAD mechanism prevents faulty processes from causing the
non-faulty processes to run longer than necessary. However, the simple static
approach above presently suffices to prove the correctness of our
approximate BVC algorithm.
Since (\ref{e_t_threshold}) holds for $1\leq l\leq d$,
and the bounds $U$ and $\nu$ also apply to all dimensions,
if each non-faulty process
terminates after $1+\lceil\log_{1/(1-\gamma)}\, \frac{U-\nu}{\epsilon}\rceil$
rounds, $\epsilon$-agreement is ensured. As shown previously, validity condition
is satisfied as well.
Thus, the proposed algorithm is correct, and $n\geq (d+2)f+1$ is
sufficient for approximate consensus in asynchronous systems.
++++++++++++++++++++++++++++++++++++}
\comment{+++++++++++++++++++++++++++++
Since $0<\gamma< 1$, if all the non-faulty processes perform the proposed algorithm for
a {\em large enough} number of rounds, then {\bf $\epsilon$-agreement} will be achieved. 
Since we have already proved that {\bf validity} condition also holds, to
conclude that the proposed algorithm is correct we now only need
to argue that the non-faulty processes terminate only after a {\em large enough} number
of rounds.

\vspace*{6pt}
\hrule
\paragraph{Part IV:}
By definition of $\epsilon$-agreement, $\epsilon$-agreement is achieved if and only
if, for each dimension $l$, $1\leq l\leq d$, the $l$-th elements of the decision vectors
at non-faulty processes are within $\epsilon$.
Since validity is already proved, we only need to prove element-wise consensus (within
$\epsilon$). Due to the similarity of our algorithm structure and AAD, it turns out
that the following termination mechanism is correct:
{\em Separately} for each dimension $l$, the processes estimate $\delta_l$, as the
$\delta$ in {\em Component \#3} of AAD with $l$-th elements
of input vectors treated as scalar inputs. Each process then uses
$\max_{1\leq l\leq d} 1+ \lceil\log_{\frac{1}{1-\gamma}}\, \frac{\delta_l}{\epsilon}\rceil$
as the threshold for the number of rounds. The base of the logarithm in our case is
$\frac{1}{1-\gamma}$ (instead of 2, as in AAD) because of (\ref{e_convergence}).
Once the threshold is chosen this way,
the rest of the algorithm for terminating the processes is identical to AAD.

In Part III, we concluded that the proposed algorithm
satisfies the $\epsilon$-{\em agreement} and {\em validity} conditions for approximate
consensus, and Part IV describes how the {\em termination} condition can be
satisfied using the corresponding mechanism from AAD. This proves
that $n\geq (d+2)f+1$ is a sufficient condition for approximate consensus in
an asynchronous system.
++++++++++++++++++}
\end{proof}

~

\section{Simpler Approximate BVC Algorithms with Restricted Round Structure}
\label{s_simple}

The proposed approximate BVC algorithm relies on {\em Component \#1}
of AAD for exchange of state information among the processes.
The communication pattern of AAD requires three message delays in each
round (i.e., a causal chain of three messages per round), to ensure
strong properties for sets $B_i[t]$, as summarized in Section \ref{ss_async_suff}.
In this section, we consider  simpler (restricted) round structure
that reduces the communication delay, and the number of messages, per round.
The price of the reduction in message cost/delay is an increase in the
number of processes necessary to achieve approximate BVC, as seen below.

We consider a restricted round structure 
for achieving {\em approximate} consensus in
{\em synchronous} and {\em asynchronous} settings both.
In both settings, 
each process $p_i$ maintains state $\bfv_i[t]$, as in the case
of the algorithm in Section \ref{ss_async_suff}.
$\bfv_i[0]$ is initialized to the input vector at process $p_i$.

\paragraph{\normalfont\em Synchronous approximate BVC:}
The restricted algorithm structure for a synchronous system
is as follows. The algorithm executes in synchronous rounds, and each process
$p_i$ performs the following steps in the $t$-th round, $t>0$.
\begin{enumerate}
\item Transmit current vector state, $\bfv_i[t-1]$, to all the processes.
Receive vector state from all the processes. If a message is not
received from some process, then its vector state is assumed to have some
default value (e.g., the all-0 vector).

\item Compute new state $\bfv_i[t]$ as a function
of $\bfv_i[t-1]$ and the vectors received from the other processes in the above step.
\end{enumerate}

\paragraph{\normalfont\em Asynchronous approximate BVC:}
The restricted structure of the asynchronous rounds in the asynchronous setting is
similar to that in \cite{AA_Dolev_1986}. The messages in this case are
tagged by the round index, as in Section \ref{ss_async_suff}.
Each process $p_i$ performs the following steps in its $t$-th round, $t>0$: 
\begin{enumerate}
\item Transmit current state $\bfv_i[t-1]$ to all the processes.
These messages are tagged by round index $t$.

Wait until a message tagged by round index $t$ is received
from $(n-f-1)$ {\em other} processes.

\item Compute new state $\bfv_i[t]$ as a function of $\bfv_i[t-1]$,
and the $(n-f-1)$ other vectors 
collected in the previous step (for a total of $n-f$ vectors).
\end{enumerate}

\noindent
For algorithms with the above round structures, the following results
can be proved; the proofs are similar to those in Section \ref{s_async}.

\begin{theorem}
\label{t_simple}
For the restricted synchronous and asynchronous round structures
presented above in Section \ref{s_simple},
following conditions are necessary and sufficient:
\begin{itemize}
\item Synchronous case: $n\geq (d+2)f+1$ 
\item Asynchronous case: $n\geq (d+4)f+1$ 
\end{itemize}
\end{theorem}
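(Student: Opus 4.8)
The plan is to prove both directions by adapting the machinery of Section \ref{s_async}. For sufficiency I would leave the restricted round structures (transmit, then collect) exactly as stated and only instantiate the unspecified Step~2 update with the same $\Gamma$-based averaging used in the asynchronous approximate BVC algorithm; the correctness argument then reuses Parts~I--III of the proof of Theorem \ref{t_approx_suff} almost verbatim. For necessity I would mirror Theorems \ref{t_exact_nec} and \ref{t_approx_nec}: a direct adversarial construction for $f=1$, lifted to $f>1$ by the group-simulation argument in which one process simulates $f$ pseudo-processes. The single quantity that distinguishes the synchronous and asynchronous bounds is the amount of \emph{commonly reliable} information that two non-faulty processes are guaranteed to share in a round.

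In the synchronous structure every non-faulty message is delivered, so the $n$ values received by $p_i$ and $p_j$ agree on all $\geq n-f$ non-faulty entries. I would have each process average a deterministically chosen point of $\Gamma(\Phi(C))$ over every subset $C$ of size $n-f$ of its received multiset. Validity holds because each such $C$ contains at most $f$ faulty entries, so $\Gamma(\Phi(C))$ lies in the convex hull of the non-faulty states; by Lemma \ref{lem:TverSuper} these sets are non-empty precisely when $n-f\geq(d+1)f+1$, i.e. when $n\geq(d+2)f+1$, and convergence is then Observations~1--3 verbatim, using a common size-$(n-f)$ subset $C_{ij}$ drawn from the shared non-faulty entries. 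In the asynchronous structure a process collects only $n-f$ values, so the sets $S_i,S_j$ of senders satisfy $|S_i\cap S_j|\geq n-2f$, of which at most $f$ are faulty, leaving only $\geq n-3f$ commonly reliable values. Here I would average over subsets $C$ of a fixed size $s$ with $(d+1)f+1\leq s\leq n-3f$: the lower bound gives $\Gamma(\Phi(C))\neq\emptyset$ and validity, while the upper bound guarantees a common all-non-faulty $C_{ij}$ and hence the shared valid point $\bfz_{ij}\in Z_i\cap Z_j$ needed for convergence. Such an $s$ exists precisely when $n\geq(d+4)f+1$, and the remaining bookkeeping is identical to Theorem \ref{t_approx_suff} with $\gamma$ rescaled to the new value of $|Z_i|$.

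For necessity I would reuse the scaled standard-basis inputs, together with a zero vector, from Theorems \ref{t_exact_nec} and \ref{t_approx_nec}. For $f=1$ the goal is to force the commonly reliable configuration down to exactly $d+1$ basis-like vectors, whose $\Gamma$ is empty by the computation in Theorem \ref{t_exact_nec}. A single equivocating faulty process---together, in the asynchronous case, with an adversarial choice of which $n-f-1$ messages each process collects---can then keep two non-faulty processes pinned near their own distinct inputs in every round, so that their states never come within $\epsilon$. This construction goes through exactly at $n=(d+2)f$ (synchronous) and $n=(d+4)f$ (asynchronous), one below each claimed threshold. For $f>1$ I would partition the processes into $f$ groups, let one process simulate each group, and reduce to the $f=1$ impossibility, as in Theorems \ref{t_exact_nec} and \ref{t_approx_nec}.

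I expect the main obstacle to be the asynchronous $f=1$ lower bound. One must schedule the message receipts and the equivocation so that the reliably shared information between every pair of non-faulty processes is held to $n-3f=d+1$ values arranged in the empty-$\Gamma$ configuration, and then argue that \emph{no} state-update rule---not merely the $\Gamma$-averaging one---can force convergence. This is the vector analogue of the $2f$ separation between the $3f+1$ bound of \cite{Abraham_optimalresilience04} and the $5f+1$ bound of \cite{AA_Dolev_1986}, and making the indistinguishability airtight, rather than merely exhibiting an empty intersection, is the delicate part.
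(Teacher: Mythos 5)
Your proposal takes essentially the same route as the paper, which likewise proves sufficiency by plugging the $\Gamma$-based Step~2 of the algorithm of Section~\ref{ss_async_suff} into the restricted rounds and reducing everything to the key property $|B_i[t]\cap B_j[t]|\geq (d+1)f+1$ (obtained from $n-f$ common vectors in the synchronous case and $n-3f$ in the asynchronous case), and which establishes necessity via the same adversarial construction---an equivocating faulty process pinning each non-faulty process at its own basis-like input so that validity forces the state to never move, lifted to $f>1$ by the group-simulation argument. Your explicit choice of a subset size $s$ with $(d+1)f+1\leq s\leq n-3f$ in the asynchronous Step~2 is a correct sharpening of a detail the paper leaves implicit: since $|B_i[t]|=n-f$ exactly in the restricted asynchronous round, enumerating size-$(n-f)$ subsets verbatim would yield the single set $B_i[t]$ and no guaranteed common point $\bfz_{ij}$, so the smaller subset size is in fact needed for the convergence argument to go through.
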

To avoid repeating the ideas used in Section \ref{s_async}, we do not present
complete formal proofs here.
We can prove sufficiency constructively. The restricted round
structures above already specify the Step 1 of each round.
We can use Step 2 analogous to that of the algorithm in Section \ref{ss_async_suff},
with $B_i[t]$ being redefined as the set of vectors received by process $p_i$ in Step 1 of
the restricted structure.
\begin{itemize}

\item In the synchronous setting, $n\geq (d+2)f+1$ is necessary.
With $n\geq (d+2)f+1$, observe that any two non-faulty processes
$p_i$ and $p_j$ will receive identical vectors from $n-f\geq (d+1)f+1$
non-faulty processes. 
Thus, $B_i[t]\cap B_j[t]$ contains at least $(d+1)f+1$ identical
vectors.

\item In the asynchronous setting, $n\geq (d+4)f+1$ is necessary.
With $n\geq (d+4)f+1$,
each non-faulty processe will have, in Step 2, vectors
from at least $n-f$ processes (including itself). Thus, any two fault-free processes
will have, in Step 2, vectors from at least $n-2f$ identical
processes, of which at most $f$ may be faulty. Thus,
$B_i[t]\cap B_j[t]$ contains at least $n-3f$ identical vectors (corresponding
to the state of $n-3f$ non-faulty processes). Note that $n-3f\geq (d+1)f+1$.
\end{itemize}
The proof of correctness of the algorithm in Section \ref{ss_async_suff}
relies crucially on the property that \[|B_i[t]\cap B_j[t]|~\geq~(d+1)f+1.\]
As discussed above, when the number of nodes satifies the constraints
in Theorem \ref{t_simple}, this property holds for the restricted round
structures too. The rest of the proof of correctness of the
restricted algorithms is then similar to the proof of Theorem \ref{t_approx_nec}. 
Thus, the above synchronous and asynchronous algorithms can achieve
{\rm approximate} BVC.

\section{Summary}
\label{s_summary}

This paper addresses Byzantine vector consensus (BVC) wherein the input
at each process, and its decision, is a $d$-dimensional vector. We derive tight necessary
and sufficient bounds on 
the number of processes required for {\em Exact BVC} in synchronous systems, and
{\em Approximate BVC} in asynchronous systems.

In Section \ref{s_simple}, we derive bounds on 
the number of processes required for algorithms
with restricted round structures to achieve {\em approximate}
consensus in synchronous as well as
asynchronous systems.


\section*{Acknowledgments}

 Nitin Vaidya acknowledges Eli Gafni for suggesting
the problem of vector consensus, Lewis Tseng for feedback, and Jennifer Welch for answering
queries on distributed computing. Vijay Garg acknowledges John Bridgman and Constantine Caramanis for
discussions on the problem.





\newpage


\appendix

\section*{Appendix}

\section{Notations}
\label{a_notations}

This appendix summarizes some of the notations and terminology introduced in the paper.

\begin{itemize}

\item $n$ = number of processes.

\item $\scriptp=\{p_1,p_2,\cdots,p_n\}$ is the set of processes in the system.

\item $f$ = maximum number of Byzantine faulty processes.

\item $d$ = dimension of the input vector as well as decision vector at each process.

\item $\bfx_i$ = $d$-dimensional input vector at process $p_i$.
The vector is equivalently viewed as a point in the Euclidean space ${\bf R}^d$.

\item $\scripth(Y)$ denotes the convex hull of the points in multiset $Y$.


\item $m$ : The proof of Theorem \ref{t_approx_suff} assumes, without loss of generality,
	that for some $m\geq n-f$, 
      processes $p_1,\cdots,p_m$ are non-faulty, and the remaining $n-m$ processes 
	are faulty.

\item $\Gamma(.)$ is defined in (\ref{e_I}).
\item $\Phi(.)$ is defined in (\ref{e_Phi}).

\item $\bfv_i[t]$ is the state of process $p_i$ at the end of its $t$-th
	round of the asynchronous BVC algorithm, $t>0$. Thus,
$\bfv_i[t-1]$ is the state of process $p_i$ at the start of its $t$-th
	round, $t>0$.
	$\bfv_i[0]$ for process $p_i$ equals its input $\bfx_i$.

\item $\bfv_{il}[t]$ is the $l$-th element of $\bfv_i[t]$, where $1\leq l\leq d$.

\item $B_i[t]$ defined in Section \ref{ss_async_suff}, is a
set of tuples of the form $(p_j,\bfw_j,t)$,
obtained by process $p_i$ in Step 1 of the
approximate consensus algorithm.

\item {\em Weight} in a convex combination is defined in Definition \ref{d_valid}

\item $\gamma = \frac{1}{n\nchoosek{n}{n-f}}$, as defined in (\ref{e_gamma}).
	Note that $0<\gamma<1$ for finite $n>1$.


\item $\Omega_l[t] = \max_{1\leq k\leq m}~ \bfv_{kl}[t]$

\item $\mu_l[t] = \min_{1\leq k\leq m}~ \bfv_{kl}[t]$

\item $\rho_l[t] = \Omega_l[t]-\mu_l[t]$

\item $|Y|$ denotes the size of a {\em multiset} $Y$.

\item $\parallel a\parallel$ is the absolute value of a real number $a$.

\end{itemize}

\section{Multisets and Multiset Partition}
\label{a_multisets}

Multiset is a generalization on the notion of a set.
While the members in a set must be distinct,
a multiset may contain the same member multiple times.

Notions of a {\em subset of a multiset} and a {\em partition of a multiset}
have natural definitions.
For completeness, we present the definitions here.

Suppose that $Y$ is a multiset. $Y$ contains $|Y|$ members.
Denote the members in $Y$ as $y_i$, $1\leq i\leq |Y|$.
Thus, $Y = \{ y_1,y_2,\cdots,y_{|Y|} \}$.
Define set $N_Y= \{1,2,\cdots,|Y|\}$. Thus, $N_Y$ contains
integers from 1 to $|Y|$.
Since $Y$ is a multiset, it is possible that $y_i=y_j$ for some $i\neq j$.

$Z$ is a subset of $Y$ provided that
there exists a set $N_Z\subseteq N_Y$ such that
\[
Z = \{ y_i~:~ i\in N_Z\}
\]

Subsets $Y_1,Y_2,\cdots, Y_b$ of multiset $Y$ form a partition of $Y$ provided
that there exists a partition $N_1,N_2,\cdots, N_b$ of set $N_Y$
such that
\[
Y_j = \{ y_i~:~i \in N_j\}, ~~~~ 1\leq j\leq b
\] 

\section{Clarification for the Proof of Theorem \ref{t_exact_nec}}
\label{a_sync}

In the proof of Theorem \ref{t_exact_nec}, when considering
the case of $f=1$, we claimed the following:
\begin{list}{}{}
\item
Since none of the non-faulty processes know which process, if any, is faulty,
as elaborated in Appendix \ref{a_sync}, the
decision vector must be in the convex hull of each multiset containing
the input vectors of $n-1$ of the processes (there are $n$ such multisets).
Thus, this intersection must be non-empty,
for all possible input vectors at the $n$ processes.
\end{list}
Now we provide an explanation for the above claim.

Suppose that the input at process $p_i$ is $\bfx_i$, $1\leq i\leq n$.
All the processes are non-faulty, but the processes do not know this fact.
The decision vector chosen by the processes must 
satisfy the {\em agreement} and {\em validity} conditions both.
\begin{itemize}
\item With $f=1$, any one process may potentially be faulty. In particular,
process $p_i$ ($1\leq i\leq n$) may possibly be faulty. Therefore,
the input $\bfx_i$ of process $p_i$ cannot be trusted by other processes.
Then to ensure {\em validity}, the decision vector chosen by any other process $p_j$ ($j\neq i$)
must be in the convex hull of the inputs at the processes
in $\scriptp-\{p_i\}$ \,(i.e., all processes except $p_i$).
Thus, the decision vector of process $p_j$ ($j\neq i$) must be in the
convex hull of the points in multiset $X^i$ below. \[X^i=\{\bfx_k~:~k\neq i,~1\leq k\leq n\}.\]

\item To ensure {\em agreement}, the decision vector chosen by all the processes must be identical. Therefore,
the decision vector must be in the intersection of the convex hulls of all the multisets $X^i$
($1\leq i\leq n$) defined
above. Thus,
 we conclude that the decision vector must be in the intersection below, where
 $\scripth(X^i)$ denotes the {\em convex hull}
of the points in multiset $X^i$, and $Q_i$ denotes $\scripth(X^i)$.
\begin{eqnarray} \cap_{i=1}^n~ \scripth(X^i) ~ =~\cap_{i=1}^n~Q_i\label{e_a_sync} \end{eqnarray}
\end{itemize}
If the intersection in (\ref{e_a_sync}) is empty, then there is no decision vector that satisfies {\em validity}
and {\em agreement} conditions both.
Therefore, the intersection must be non-empty.

As shown in the proof of Theorem \ref{t_exact_nec}, if $n$ is not large enough,
then the intersection in (\ref{e_a_sync}) may be empty.

\section{Tverberg Partition}
\label{a_tverberg}

Figure \ref{f_heptagon} illustrates a Tverberg partition of a
set of 7 vertices in 2-dimensions. The 7 vertices are at the corners
of a heptagon. Thus, $n=7$ here, and $d=2$.
Let $f=2$. Then, $n=(d+1)f+1$, and Tverberg's Theorem \ref{t_tverberg}
implies the presence of a Tverberg partition consisting of $f+1=3$ subsets.
Figure \ref{f_heptagon} shows the convex hulls of the three subsets
in the Tverberg partition: one convex hull is a triangle, and the other
two convex hulls are each a line segment.
In this example,
the three convex hulls intersect in exactly one point. Thus, there is
just one Tverberg point. In general, there can be multiple Tverberg points.

\begin{figure}[t]
\centerline{
\includegraphics[width=0.34\textwidth]{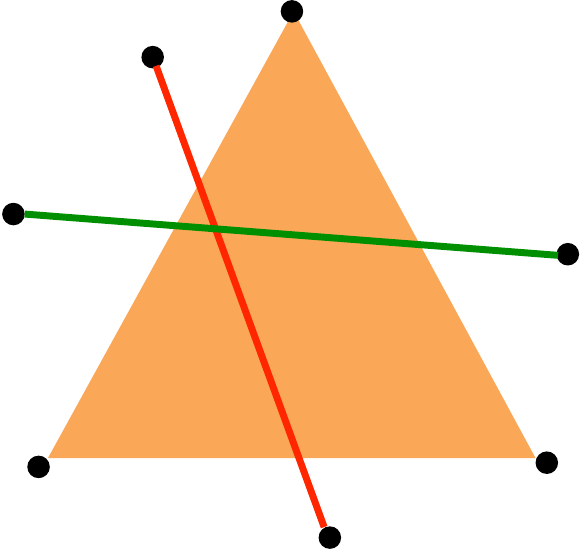}
}

\caption{\em Illustration of a Tverberg partition. \protect\newline
{\em Acknowledgment}: The above example is inspired by an illustration
authored by David Eppstein, which is available
in the public domain from {\em Wikipedia Commons}.\newline
}
\label{f_heptagon}
\end{figure}

\comment{+++++++++++++++
\section{Proof of Claim \ref{c_ergodicity_T}}
\label{a_ergodicity_T}

We use boldface upper case letters to denote matrices,
rows of matrices, and their elements. For instance,
$\bfA$ denotes a matrix, $\bfA_i$ denotes the $i$-th row of
matrix $\bfA$, and $\bfA_{ij}$ denotes the element at the
intersection of the $i$-th row and the $j$-th column
of matrix $\bfA$.

A matrix is {\em square} if its number of rows equals the number of columns.
A matrix $\bfA$ is row stochastic, if all its elements are non-negative,
and the elements in each row add to 1.

As noted earlier, we define
$\parallel x \parallel$ as the absolute
value of real number (scalar) $x$.

For a row stochastic matrix $\bfA$,
 coefficients of ergodicity $\delta(\bfA)$ and $\lambda(\bfA)$ are defined as
follows \cite{Wolfowitz}:
\begin{eqnarray}
\delta(\bfA) & = &   \max_k ~ \max_{i,j}~ \parallel \bfA_{ik}-\bfA_{jk} \parallel \label{e_delta} \\
\nonumber ~\\
\lambda(\bfA) & = & 1 - \min_{i,j} \sum_k \min(\bfA_{ik} ~, \bfA_{jk}) \label{e_lambda}
\end{eqnarray}
It is easy to show \cite{Wolfowitz}
 that  $0\leq \delta(\bfA) \leq 1$ and $0\leq \lambda(\bfA) \leq 1$, and that the rows
of $\bfA$ are all identical if and only if $\delta(\bfA)=0$. Also, $\lambda(\bfA) = 0$ if and only if $\delta(\bfA) = 0$.

The next result from \cite{Hajnal58} establishes a relation between the coefficient of ergodicity $\delta(\cdot)$ of a product of square row stochastic matrices, and the coefficients of ergodicity $\lambda(\cdot)$ of the individual matrices defining the product. 

\begin{lemma}
\label{l_delta}
For any $t$ square row stochastic matrices $\bfA[t],\bfA[t-1],\dots \bfA[1]$, 
\begin{eqnarray*}
\delta(\bfA[t]\bfA[t-1]\cdots \bfA[1]) ~\leq ~
 \Pi_{\tau=1}^t ~ \lambda(\bfA[\tau]).
\end{eqnarray*}
\end{lemma}

Lemma \ref{l_delta} is proved in \cite{Hajnal58}. It implies that
if, for all $\tau$, $\lambda(\bfA[\tau])\leq 1-\gamma$ for some $\gamma$,
where $0<\gamma< 1$, then
\[ \delta(\bfA[p]\bfA[p-1]\cdots \bfA[1]) \leq (1-\gamma)^t.\]
Thus, because $0<\gamma\leq 1$, $\delta(\bfA[p]\bfA[p-1]\cdots \bfA[1])$ will
approach 0 as $t$ approaches $\infty$. 

Now, from the definition of $\lambda(\bfA)$ observe that if, for every pair of rows
$i$ and $j$ in $\bfA$, there exists a column $k$ such that
$\bfA_{ik}\geq\gamma$ and $\bfA_{jk}\geq \gamma$ for some constant $\gamma$, $0<\gamma\leq 1$,
then
\begin{eqnarray} \lambda(\bfA)\leq 1-\gamma \label{e_s_gamma_}
\end{eqnarray}
Thus Claim \ref{c_scrambling} and (\ref{e_s_gamma_}) imply that for matrix $\bfM[t]$, $t>0$,
\begin{eqnarray}
\lambda(\bfM[t]) & \leq  &1-\gamma ~, \label{e_M_gamma} ~~~~ 0<\gamma\leq 1
\end{eqnarray}
Recall the definition of $\bfT[t]$ in the statement of Claim \ref{c_ergodicity_T} ($t>0$).
\[ \bfT[t] ~=~\bfM[t]\bfM[t-1]\cdots \bfM[1]. \]
Therefore,
\[ \delta(\bfT[t]) ~=~\delta(\bfM[t]\bfM[t-1]\cdots \bfM[1]). \]
Lemma \ref{l_delta} and (\ref{e_M_gamma}) together imply that
\[ \delta(\bfT[t]) ~=~\delta(\bfM[t]\bfM[t-1]\cdots \bfM[1])~\leq~(1-\gamma)^t.\]
Then, by the definition of $\delta(.)$, for $1\leq i,j,k\leq m$,
\begin{eqnarray}
 \parallel\bfT_{ik}[t] - \bfT_{jk}[t]\parallel\, \leq (1-\gamma)^t.
\label{e_T_bound}
\end{eqnarray}
This proves Claim \ref{c_ergodicity_T}.

%
+++++++++++}


\section{Proof of (\ref{e_convergence})}
\label{a_agreement}

$\bfv_{il}[t]$ denotes the $l$-th element of the vector state $\bfv_i[t]$ of
process $p_i$, $1\leq l\leq d$.
Processes $p_1,\cdots,p_m$ are non-faulty, and processes $p_{m+1},\cdots,p_n$
are faulty, where $m\geq n-f$.
Recall that, for $1\leq l\leq d$,
\begin{eqnarray}
\Omega_l[t] &  = & \max_{1\leq k\leq m}~ \bfv_{kl}[t], \mbox{~maximum value of $l$-th elements at non-faulty processes}\\
\mu_l[t] &  = & \min_{1\leq k\leq m}~ \bfv_{kl}[t],\mbox{~minimum value of $l$-th elements at non-faulty processes}\\
\mbox{Define}\hspace*{.5in} \\
\rho_l[t] & = & \Omega_l[t]~-~ \mu_l[t]
\end{eqnarray}
Equivalently,
\begin{eqnarray}
\rho_l[t] & = & \max_{1\leq i,j\leq m}~ \parallel\bfv_{il}[t]-\bfv_{jl}[t]\parallel
\label{rho_equi}
\end{eqnarray}
where $\parallel . \parallel$ operator yields the absolute value of the scalar parameter.

Consider any two non-faulty processes $p_i,p_j$ (thus,
$1\leq i,j \leq m$). Consider $1\leq l\leq d$. Then
\begin{eqnarray}
\mu_l[t-1] ~ \leq & \bfv_{il}[t-1] & \leq ~ \Omega_l[t-1] \\
\mu_l[t-1] ~ \leq & \bfv_{jl}[t-1] & \leq ~ \Omega_l[t-1] 
\end{eqnarray} 
{\em Observations 1} and {\em 3} in Part III of the proof of Theorem
\ref{t_approx_suff}, and the definition of $\gamma$,
imply the existence of constants $\alpha_k$'s
and $\beta_k$'s such that:
\begin{eqnarray}
\bfv_{i}[t] & = & \sum_{k=1}^m ~ \alpha_k\, \bfv_k[t-1] \mbox{~~~~where}
\label{e_i_alpha}\\
&& \alpha_k\geq 0 \mbox{~for}~1\leq k\leq m, \mbox{~~and~~} \sum_{k=1}^m\alpha_k=1 \\
&& \alpha_{g(i,j)}\geq \gamma\\~\nonumber\\
\bfv_{j}[t] & = & \sum_{k=1}^m ~ \beta_k\, \bfv_k[t-1] \mbox{~~~~where} \label{e_j_beta} \\
&& \beta_k\geq 0~\mbox{for~}1\leq k\leq m, \mbox{~~and~~} \sum_{k=1}^m\beta_k=1 \\
&& \beta_{g(i,j)}\geq \gamma \\ \nonumber \\ \nonumber
\end{eqnarray}

In the following, let us abbreviate $g(i,j)$ simply as $g$.
Thus, $\alpha_{g(i,j)}$ is same as $\alpha_g$,
and $\beta_{g(i,j)}$ is same as $\beta_g$.
From (\ref{e_i_alpha}) and (\ref{e_j_beta}),
focussing on just the operations on $l$-th elements,
 we obtain
\begin{eqnarray}
\bfv_{il}[t] & = & \sum_{k=1}^m ~ \alpha_k\, \bfv_{kl}[t-1] \nonumber\\
& \leq & \alpha_g\, \bfv_{gl}[t-1] ~+~(1-\alpha_g)\,\Omega_l[t-1] 
\mbox{~~~~because $\bfv_{kl}[t-1]\leq \Omega_l[t-1],~\forall k$} \nonumber\\
& \leq & \gamma \, \bfv_{gl}[t-1] ~+~ (\alpha_g-\gamma)\bfv_{gl}[t-1]~+~(1-\alpha_g)\,\Omega_l[t-1]  \nonumber \\
& \leq & \gamma \, \bfv_{gl}[t-1] ~+~ (\alpha_g-\gamma)\Omega_l[t-1]~+~(1-\alpha_g)\,\Omega_l[t-1]  \nonumber \\
&& \hspace*{1in}\mbox{~~~~because
$\bfv_{gl}[t-1]\leq \Omega_l[t-1]$ and $\alpha_g\geq \gamma$
} \nonumber \\
& \leq & \gamma\,\bfv_{gl}[t-1]~+~(1-\gamma)\,\Omega_l[t-1] 
\label{e_il} \\ \nonumber\\ \nonumber\\
\bfv_{jl}[t] & = & \sum_{k=1}^m ~ \beta_k\, \bfv_{kl}[t-1] \nonumber\\
& \geq & \beta_g\, \bfv_{gl}[t-1] ~+~(1-\beta_g)\,\mu_l[t-1] 
\mbox{~~~~because $\bfv_{kl}[t-1]\geq \mu_l[t-1],~\forall k$} \nonumber\\
& \geq & \gamma \, \bfv_{gl}[t-1] ~+~ (\beta_g-\gamma)\bfv_{gl}[t-1]~+~(1-\beta_g)\,\mu_l[t-1] \nonumber\\
& \geq & \gamma \, \bfv_{gl}[t-1] ~+~ (\beta_g-\gamma)\mu_l[t-1]~+~(1-\beta_g)\,\mu_l[t-1] \nonumber \\ 
&& \hspace*{1in}\mbox{~~~~because $\bfv_{gl}[t-1]\geq \mu_l[t-1]$, and
$\beta_g\geq \gamma$}\nonumber\\
& \geq & \gamma\,\bfv_{gl}[t-1]~+~(1-\gamma)\,\mu_l[t-1]
\label{e_jl}\\
 ~\nonumber\\
\Rightarrow~~
\bfv_{il}[t] ~-~\bfv_{jl}[t] & \leq & (1-\gamma)\,(\Omega_l[t-1]-\mu_l[t-1])
\mbox{~~~~ subtracting (\ref{e_jl}) from (\ref{e_il})} \label{il_jl}
\end{eqnarray}
By swapping the role of $p_i$ and $p_j$ above, we can also show that
\begin{eqnarray}
\bfv_{jl}[t] ~-~\bfv_{il}[t] & \leq & (1-\gamma)\,(\Omega_l[t-1]-\mu_l[t-1])
\label{jl_il}
\end{eqnarray}
Putting (\ref{il_jl}) and (\ref{jl_il}) together, we obtain
\begin{eqnarray}
\parallel\bfv_{il}[t] ~-~\bfv_{jl}[t]\parallel & \leq & (1-\gamma)\,(\Omega_l[t-1]-\mu_l[t-1])
\mbox{~~because $\Omega_l[t-1]\geq\mu_l[t-1]$}\nonumber\\
	&\leq & (1-\gamma)\,\rho_l[t-1] \mbox{~~~~ by the definition of $\rho_l[t-1]$} \label{il_jl_rho}\\ ~ \nonumber \\
\Rightarrow ~~~~
\max_{1\leq i,j\leq m}~\parallel\bfv_{il}[t] ~-~\bfv_{jl}[t]\parallel & \leq & (1-\gamma)\,\rho_l[t-1]\\&& \mbox{because the previous inequality holds for all $1\leq i,j\leq m$} \nonumber\\
\Rightarrow ~~~~
\rho_l[t] & \leq & (1-\gamma)\,\rho_l[t-1] \mbox{~~~~by (\ref{rho_equi})}
\label{e_rho_t_1} \\
\Rightarrow ~~~~
\Omega_l[t]-\mu_l[t] & \leq & (1-\gamma)\, (\Omega_l[t-1]-\mu_l[t-1])
\mbox{~~~~by definition of $\rho_l[t]$} \nonumber
\end{eqnarray}

 This proves (\ref{e_convergence}).

\comment{+++++++++++
By repeated application of (\ref{e_rho_t_1})
we get
\begin{eqnarray}
\rho_l[t] & \leq & (1-\gamma)^t\,\rho_l[0] \label{rho_e_t}
\end{eqnarray}
Therefore, for some $\epsilon>0$, if 
\[ t>\log_{1/(1-\gamma)}~~ \frac{\delta_l[0]}{\epsilon},\]
then
\begin{eqnarray} \rho_l[t] ~ <~\epsilon \label{e_rho_epsilon} \end{eqnarray}
+++++++++}

\comment{++++++++++++++++++++
Define
\begin{eqnarray}
t_{\min} ~=~ \max_{1\leq l\leq d}~~
	\left( \log_{1/(1-\gamma)}~~ \frac{\delta_l[0]}{\epsilon} \right)
\label{e_t_min}
\end{eqnarray}
Then, by (\ref{e_rho_epsilon}), the definition of $\rho_l[t]$,
and (\ref{e_t_min}), it follows that,
for $1\leq i,j\leq m$ and $1\leq l\leq d$,
\begin{eqnarray}
\parallel \bfv_{il}[t] - \bfv_{jl}[t]\parallel  & < & \epsilon
\end{eqnarray}
if 
\begin{eqnarray}
t~\geq~ t_{\min} ~=~
\max_{1\leq l\leq d}~~
	\left( \log_{1/(1-\gamma)}~~ \frac{\delta_l[0]}{\epsilon} \right)
\end{eqnarray}
This proves (\ref{e_agree_threshold}).
++++++++++++}


\comment{+++++++++++++++
\section{Termination Condition for Asynchronous BVC in Section \ref{ss_async_suff}}
\label{a_termination}

We now show how how our algorithm achieves termination.
We require all the processes to estimate the round $t$ after which
all elements of the state vector at the non-faulty processes are within
$\epsilon$ of each other.
As shown in Appendix \ref{a_agreement}, if
\[ t>\log_{1/(1-\gamma)}~~ \frac{\Delta}{\epsilon},\]
then all components of vectors at non-faulty processes are within $\epsilon$
of each other. At round 1, we require every process $p_i$ to estimate an upper bound
on the total number of rounds required, $estimate_i$, by 
computing
\[ estimate_i = \max_{1\leq l\leq d} \sum_{w \in B_i} \parallel w[l] \parallel \]

+++++++++++++++ NOT TRUE ++++++++++++++++
Since vectors of all non-faulty processes are in $B_i$ due to 
properties (1) and (2), we get that for all $i$
$estimate_i > \Delta$.

Process $p_i$ runs Approximate BVC algorithm only for rounds given by $estimate_i$.
After that round, it does a reliable-broadcast of a final message ({\em halt}, round number) to all processes
and halts.
A process $p_j$ such that $estimate_j > estimate_i$, continues to execute
additional rounds by using the value of $p_i$'s state from the round just prior
to the {\em halt} message. When a process receives $f+1$ {\em halt} messages,
then it also halts by reliable-broadcasting its own {\em halt} message.

Since each non-faulty process $p_i$ executes the algorithm only for $estimate_i$ rounds,
the termination is obvious. If a non-faulty process does a reliable-broadcast of a {\em halt} message,
it is clear that processes have reached $\epsilon$-agreement and will continue to
do so for later rounds. If a faulty process issues a {\em halt} message, then all processes
will continue to execute the approximate BVC algorithm and eventually reach $\epsilon$-agreement
when the first non-faulty process does a reliable-broadcast of a {\em halt} message.


+++++}


\section{Optimization of Step 2 of Asynchronous BVC}
\label{a_opt}

Property 1 of {\em Component \#1} of AAD described in Section
\ref{ss_async_suff}  is a consequence of
a stronger property satisfied by the AAD algorithm.

In AAD, each process $p_k$ sends out notifications to others
each time it adds a new tuple to its $B_k[t]$; the notifications
are sent over the FIFO links.
AAD defines a process $p_k$ to be a ``witness'' for process $p_i$
provided that (i) $p_k$ is known to have added at least $n-f$ tuples
to $B_k[t]$, and (ii) all the tuples that $p_k$ claims to have added
to $B_k[t]$ are also in $B_i[t]$.

AAD also ensures that each non-faulty process has at least
$n-f$ witnesses, ensuring that any two non-faulty processes
have at least $n-2f$ witnesses in common, where $n-2f\geq f+1$.
Thus, any two non-faulty processes $p_i$ and $p_j$ have at least one non-faulty
witness in common, say $p_k$. This, in turn, ensures (due to the manner in
which the advertisements above are sent) that
$B_i[t]\cap B_j[t]$ contains at least the first $n-f$ tuples
advertised by $p_k$.

Each process can keep track of the order in which the tuples advertised by
each process are received. Then, in Step 2 of the asynchronous approximate
BVC algorithm, instead of enumerating all the $n-f$-size subsets $C$ of $B_i[t]$,
it suffices to only consider those subsets of $B_i[t]$ that correspond
to the first $n-f$ tuples advertised by each witness of $p_i$.
Since there can be no more than $n$ witnesses, at most $n$ sets $C$ need
to be considered.
Thus, in this case $|Z_i|\leq n$.

Since each pair of non-faulty processes $p_i$ and $p_j$ shares a non-faulty 
witness, despite considering only $\leq n$ subsets in Step 2,
$Z_i$ and $Z_j$ computed by $p_i$ and $p_j$
contain at least one identical point, say, $\bfz_{ij}$. Our 
proof of correctness of the algorithm relied on the existence of such a point.

It should now be easy to see that the rest of the proof
of correctness will remain the same, with $\gamma$ being re-defined as
\[
\gamma ~=~\frac{1}{n^2}.
\]

\comment{++++++++++++
\section{Optimization of Step 2 of Asynchronous BVC}
\label{a_opt}

This optimization reduces the computational cost of Step 2 of the
proposed asynchronous approximate BVC algorithm.
Thus, the optimization is not necessary for correctness of the
algorithm.

Property 1 of {\em Component \#1} of AAD described in Section
\ref{ss_async_suff}  is a consequence of
a stronger property, which we refer to a Property 4.
At the start of AAD, each process $p_i$ initializes
$B_i[t]$ to be $\emptyset$, and adds tuples to $B_i[t]$ 
during the execution of the algorithm. The definition below
refers to the {\em first} $n-f$ such tuples addded by a certain
process $p_k$ to $B_k[t]$.
\begin{definition}
A process $p_k$ is said to be a ``witness'' for
a non-faulty process
$p_i$ provided that
$B_i[t]$ contains the {\bf first} $n-f$ values that process $p_k$
``claims'' to have added to $B_k[t]$.
\end{definition}
The ``claims'' referred above are propagated via message exchange
in AAD.
Let $W_k$ denote the first $n-f$ tuples that a non-faulty $p_k$ adds
to $B_k[t]$. Thus, the above definition of a witness implies
that, if a non-faulty process $p_k$ is a witness for a non-faulty
process $p_i$, then $W_k\subseteq B_i[t]$. Property 4(i) below
states that process $p_i$ also {\em knows} the set $W_k$ for each
of its non-faulty witnesses $p_k$. (AAD performs message exchange 
over the FIFO links to ensure this
property.)
We now state Property 4:
\begin{itemize}
\item (Property 4)\\
	(i) Each non-faulty process $p_i$ learns the identity of at least
	$n-f$ witnesses (not necessarily non-faulty), and for each
	such witness that is non-faulty, say $p_k$, that $p_i$
	knows about, $p_i$ also correctly learns $W_k$.\\
	(ii)
        Any two non-faulty processes $p_i$ and $p_j$ have at least one
	{\em non-faulty witness} in common.
\end{itemize}

Then by Property 4, if $p_l$ is a common non-faulty witness for $p_i$
and $p_j$, then 
\begin{eqnarray}
W_l & \subseteq & B_i[t] \\
W_l & \subseteq & B_j[t]
\end{eqnarray}

With the above definitions, we can optimize Step 2 of the {\em asynchronous
approximate} BVC algorithm in Section \ref{ss_async_suff} as follows.
Rest of the algorithm remains unchanged.

\begin{itemize}
\item
Each non-faulty process $p_i$ performs the following steps to compute $Z_i$
in its $t$-th round:
\begin{itemize}
\item
Define a set $\scriptw_i$ containing $W_k$ for each witness $p_k$
that process $p_i$ knows about.
This is feasible due to Property 4(i) above.
By definition of $|W_k|$, $|W_k|=n-f$.

\item
	Form a multiset $Z_i[t]$ using the steps below:
	\begin{itemize}
	\item Initialize $Z_i[t]$ as empty.
	\item For each $C\in\scriptw_i$: \\
		\hspace*{0.5in}if $C\subseteq B_i[t]$ then \\
		\hspace*{0.6in}add to $Z_i$ one deterministically chosen point from $\Gamma(\Phi(C))$.
	\end{itemize}
\end{itemize}
\end{itemize}
Thus, the key difference from the original method for constructing $Z_i$ is
that we now only consider those sets $C\subseteq B_i[t]$,
which also belong to $\scriptw_i$.
Since the number of witnesses of a process cannot be larger than
$n$, the size of set
$Z_i$ is at most $n$  (instead of $|B_i[t]| \choose n-f$, as originally).

Since each pair of non-faulty processes $p_i$ and $p_j$ shares a non-faulty 
witness, it follows that $Z_i$ and $Z_j$ computed by $p_i$ and $p_j$
contain at least one identical point, say, $\bfz_{ij}$. The original
proof of correctness relied on the existence of such a point as well.
It should now be easy to see that the rest of the proof
of correctness will remain the same, with $\gamma$ being re-defined as
\[
\gamma ~=~\frac{1}{n^2}.
\]
++++++++++}


\comment{+++++

\section{Alternate Algorithm for Asynchronous BVC}
\label{a_alt_async_algo}

In this section, we present an alternative to the algorithm in Section \ref{ss_async_suff}.
The two algorithms differ only in the manner in which set $Z_i$ is computed.
Its proof of correctness is similar to the proof in Section \ref{ss_async_suff}.
The alternate algorithm computes $Z_i$ as
	\[ Z_i ~ = ~ \{~\Psi(S) ~ : ~ S\subseteq R, ~~ 
	|S|=n-f\geq (d+1)f+1\} \]

	Thus, $Z_i$ contains a Tverberg point corresponding to each multiset $S$ containing $(n-f)$
	points from $R$.
 	Thus, $|Z_i|=\nchoosek{n}{n-f}$.

++++++}


\end{document}